\theoremstyle{plain}
\newtheorem{thm}{Theorem}[section]
\newtheorem{lem}[thm]{Lemma}
\newtheorem{prop}[thm]{Proposition}
\newtheorem{col}[thm]{Corollary}
\theoremstyle{definition}
\newtheorem{defn}[thm]{Definition}
\newtheorem{ass}[thm]{Assumption}
\newtheorem{exmp}[thm]{Example}
\theoremstyle{remark}
\newtheorem{rem}[thm]{Remark}
\newcommand{\Q}{{\boldsymbol{\mathrm{Q}}}}
\newcommand{\BQ}{\mathcal{B}_\Q}
\newcommand{\CQ}{{\mathcal{C}_\Q}}
\newcommand{\FQ}{\mathcal{F}_\Q}
\newcommand{\dt}[1]{\operatorname{Det} \left ( #1 \right )}
\newcommand{\textfrac}[2]{#1 / #2}
\newcommand{\imaginary}{\mathrm{i}}
\newcommand{\enter}{\vspace{\baselineskip}}
\newcommand{\sctn}[1]{\Gamma \left ( #1 \right )}
\newcommand{\sctnbig}[1]{\Gamma \bigl ( #1 \bigr )}
\newcommand{\vectc}[1]{\mathfrak{X}_\text{c} \left ( #1 \right )}
\newcommand{\deDonder}{{d \negmedspace D \mspace{-2mu}}}
\newcommand{\linLorenz}{{l \negmedspace L \mspace{-2mu}}}
\newcommand{\Lorenz}{L}
\newcommand{\diff}{\mathfrak{diff} \left ( M \right )}
\newcommand{\Diff}{\operatorname{Diff}_0 \left ( M \right )}
\newcommand{\Lie}{\mathsterling}
\newcommand{\lie}{\ell}
\newcommand{\commutatorbig}[2]{\bigl [ #1 , #2 \bigr ]}
\newcommand{\setbig}[1]{\bigl \{ #1 \bigr \}}
\newsavebox{\foobox}
\newcommand{\subalign}[1]{%
  \vcenter{%
    \Let@ \restore@math@cr \default@tag
    \baselineskip\fontdimen10 \scriptfont\tw@
    \advance\baselineskip\fontdimen12 \scriptfont\tw@
    \lineskip\thr@@\fontdimen8 \scriptfont\thr@@
    \lineskiplimit\lineskip
    \ialign{\hfil$\m@th\scriptstyle##$&$\m@th\scriptstyle{}##$\crcr
      #1\crcr
    }%
  }
}
\providecommand{\sectionref}[1]{Section~\ref{#1}}
\providecommand{\sectionsaref}[2]{Sections~\ref{#1} and \ref{#2}}
\providecommand{\eqnref}[1]{Equation~\eqref{#1}}
\providecommand{\propsaref}[2]{Propositions~\ref{#1} and \ref{#2}}
\providecommand{\defnsaref}[2]{Definitions~\ref{#1} and \ref{#2}}
\providecommand{\colsaref}[2]{Corollaries~\ref{#1} and \ref{#2}}
\providecommand{\colssaref}[3]{Corollaries~\ref{#1}, \ref{#2} and \ref{#3}}
\DeclareSymbolFont{extraitalic}      {U}{zavm}{m}{it}
\DeclareMathSymbol{\Qoppa}{\mathord}{extraitalic}{161}
\DeclareMathSymbol{\qoppa}{\mathord}{extraitalic}{162}
\DeclareMathSymbol{\Stigma}{\mathord}{extraitalic}{167}
\DeclareMathSymbol{\Sampi}{\mathord}{extraitalic}{165}
\DeclareMathSymbol{\sampi}{\mathord}{extraitalic}{166}
\DeclareMathSymbol{\stigma}{\mathord}{extraitalic}{168}
\title{\textsc{The BRST Double Complex for the Coupling\\of Gravity to Gauge Theories}}
\author{David Prinz\footnote{Department of Mathematics and Department of Physics at Humboldt University of Berlin and Department of Mathematics at University of Potsdam; prinz@\{math.hu-berlin.de, physik.hu-berlin.de, math.uni-potsdam.de\}}}
\date{June 1, 2025}
\begin{document}

\maketitle

\begin{abstract}
	We consider (effective) Quantum General Relativity coupled to the Standard Model (QGR-SM) and clarify whether graviton-ghosts couple to matter particles. To this end, we examine the corresponding BRST and anti-BRST symmetries, which are generated by infinitesimal diffeomorphisms and infinitesimal gauge transformations. In particular, we study their properties and relations: We find that all differentials mutually anticommute, which implies that they form a double complex. In particular, we introduce the \emph{total BRST differential} as the sum of the diffeomorphism and gauge BRST differentials and similarly the \emph{total anti-BRST differential} as the sum of the respective anti-BRST differentials. Furthermore, we identify the functionals in particle fields that are (co)cycles up to total derivatives with respect to the diffeomorphism differentials as scalar tensor densities of weight one: This implies that graviton-ghosts decouple from matter particles if and only if the Yang--Mills gauge fixing Lagrange density has said tensor density weight. Moreover, we discuss the relevant gauge fixing fermions: Starting from the de Donder and Lorenz gauge fixing conditions, we introduce a \emph{total gauge fixing fermion} that generates the complete gauge fixing and ghost Lagrange density of QGR-SM. Finally, we show that the BRST cocomplexes are isomorphic to their corresponding anti-BRST complexes via ghost conjugation. Notably, this relates the BRST cohomologies to their respective anti-BRST homologies.
\end{abstract}

\section{Introduction} \label{sec:introduction}

BRST cohomology is a powerful tool to study quantum gauge theories together with their gauge fixings and corresponding ghosts via homological algebra \cite{Becchi_Rouet_Stora_1,Becchi_Rouet_Stora_2,Tyutin,Becchi_Rouet_Stora_3}. More precisely, a nilpotent operator \(S\) is introduced that performs an infinitesimal gauge transformation in direction of the ghost field. This so-called BRST operator \(S\) can be seen either as an odd vector field on the super vector bundle of particle fields or as an odd derivation on the superalgebra of particle fields. The nilpotency of \(S\) can then be used to compute its cohomology. This is useful, as physical states of the system can be identified with elements in its zeroth cohomology class. Furthermore, this formalism can be used to unify the gauge fixing and ghost Lagrange densities as follows: First of all, we understand a quantum gauge theory Lagrange density \(\mathcal{L}_\text{QGT}\) as the sum of the classical gauge theory Lagrange density \(\mathcal{L}_\text{GT}\) together with a gauge fixing Lagrange density \(\mathcal{L}_\text{GF}\) and its accompanying ghost Lagrange density \(\mathcal{L}_\text{Ghost}\), i.e.\
\begin{equation}
	\mathcal{L}_\text{QGT} \coloneq \mathcal{L}_\text{GT} + \mathcal{L}_\text{GF} + \mathcal{L}_\text{Ghost} \, .
\end{equation}
This terminology is motivated by the fact that perturbative quantization requires a gauge fixing term to calculate the propagator and a ghost term to ensure transversality. By construction, the gauge fixing and ghost Lagrange densities are not independent: In the Faddeev--Popov setup the ghost Lagrange density is designed such that the ghost field satisfies residual gauge transformations of the chosen gauge fixing as equations of motion, with the antighost acting as Lagrange multiplier.\footnote{We remark that it is possible to extend this setup by anti-BRST operators, so that ghosts and antighosts can be treated on an equal footing, cf.\ \cite{Baulieu_Thierry-Mieg_1,Prinz_6} and the constructions and results of the present article.} In the BRST framework, both the gauge fixing and ghost Lagrange densities can be generated from a so-called gauge fixing fermion \(\chi\) in ghost-degree -1 via the action of \(S\), i.e.\
\begin{equation}
	\mathcal{L}_\text{GF} + \mathcal{L}_\text{Ghost} \equiv S \chi \, .
\end{equation}
Since the term \(S \chi\) is \(S\)-exact, it is also \(S\)-closed and thus does not contribute to the zeroth cohomology class. Thus, in particular, it does not affect physical observables. To incorporate the gauge fixing, we additionally add the corresponding Lautrup--Nakanishi auxiliary fields \cite{Nakanishi,Lautrup}: These are Lie algebra valued fields that act as Lagrange multipliers and whose equations of motion are given by the gauge fixing conditions. Furthermore, it is then possible to complement this framework via an anti-BRST operator \(\overline{S}\), which is a homological differential, by essentially replacing ghosts with antighosts in addition to a slightly modified action on the corresponding ghost, antighost and Lautrup--Nakanishi auxiliary field, cf.\ \cite{Baulieu_Thierry-Mieg_1,Prinz_6,Nakanishi_Ojima,Faizal} and the present article. This setup then allows us to generate a special class of gauge fixing fermions via the action of \(\overline{S}\) on a so-called gauge fixing boson \(W\) in ghost-degree zero, i.e.\
\begin{equation}
	\omega \equiv \overline{S} W \, , \label{eqn:gauge-fixing-boson}
\end{equation}
which produces symmetric (i.e.\ Hermitian) ghost Lagrange densities such that the antighost is actually the antiparticle of the ghost, cf.\ \cite{Baulieu_Thierry-Mieg_1,Prinz_6}.

We refer the interested reader to the general introductory texts on BRST symmetry \cite{Barnich_Brandt_Henneaux_BRST,Mnev,Wernli} and the historical overview \cite{Becchi}. In addition, we mention previous studies of perturbative quantum gravity \cite{Baulieu_Thierry-Mieg_2,Baulieu_Bellon_1,Baulieu_Bellon_2,Moritsch_Schweda_Sorella,Barnich_Brandt_Henneaux_EYM,Piguet,Upadhyay,Shestakova} and the generalization of this setup to include anti-BRST operators \cite{Baulieu_Thierry-Mieg_1,Prinz_6,Nakanishi_Ojima,Faizal}. Furthermore, we emphasize that this article deals with the situation in perturbative Quantum Field Theory --- the corresponding situation in Algebraic Quantum Field Theory is discussed in \cite{Brunetti_Fredenhagen_Rejzner}.\footnote{In particular, since our observables are cross sections --- i.e.\ probabilities for \(n\)-point functions, rather than diffeomorphism-invariant functionals --- the main result of \cite{Brunetti_Fredenhagen_Rejzner} does not affect our studies: In fact, the appropriate notion for our setup is the transversality of said \(n\)-point amplitudes, cf.\ \cite{Prinz_7} and the references therein.}

In recent articles, we have studied several aspects of (effective) Quantum General Relativity coupled to the Standard Model: This includes a proper treatment of its geometric foundations \cite{Prinz_2}, a generalization of the Connes--Kreimer renormalization framework to gauge theories and gravity \cite{Prinz_3} and the complete gravity-matter Feynman rules \cite{Prinz_4}. In this article, we introduce and study the corresponding BRST double complex: The invariance of the theory under diffeomorphisms and gauge transformations implies first of all the existence of two such operators, \(P\) and \(Q\): The first performs infinitesimal diffeomorphisms in direction of the graviton-ghost and the second performs infinitesimal gauge transformations in direction of the gauge ghost, cf.\ \defnsaref{defn:diffeomorphism_brst_operator}{defn:gauge_brst_operator}. Then we provide the two gauge fixing fermions \(\stigma\) and \(\digamma\): The first implements the de Donder gauge fixing condition together with its corresponding graviton-ghosts and the second implements the Lorenz gauge fixing condition together with its corresponding gauge ghosts, cf.\ \defnsaref{prop:de_donder_gauge_fixing_fermion}{prop:lorenz_gauge_fixing_fermion}. In particular, we have reworked the conventions such that the quadratic gauge fixing and ghost Lagrange densities are both rescaled by the inverses of the respective gauge fixing parameters, \(\zeta\) and \(\xi\): This implies that all unphysical propagators are rescaled by these parameters, which introduces an additional grading on the algebra of Feynman diagrams, cf.\ \cite{Prinz_7,Prinz_3,Prinz_9}. Furthermore, we show that all non-constant functionals in the superalgebra of particle fields that are essentially closed with respect to the diffeomorphism BRST operator \(P\) and the diffeomorphism anti-BRST operator \(\overline{P}\) are scalar tensor densities of weight \(w = 1\), cf.\ \lemref{lem:p_tensor_densities}: This allows us to show that graviton-ghosts decouple from matter of the Standard Model if the gauge fixing fermion of Yang--Mills theory is a tensor density of weight \(w = 1\), cf.\ \thmref{thm:no-couplings-grav-ghost-matter-sm}. In particular, we show that every such gauge fixing fermion can be modified uniquely to satisfy said condition. Moreover, we introduce the corresponding anti-BRST operators \(\overline{P}\) and \(\overline{Q}\) in \defnsaref{defn:diffeomorphism_anti-brst_operator}{defn:gauge_anti-brst_operator} and show that all BRST operators mutually anticommute, i.e.\
\begin{subequations}
\begin{align}
	\commutatorbig{P}{\overline{P}} = \commutatorbig{\overline{P}}{\overline{P}} = \commutatorbig{Q}{\overline{Q}} = \commutatorbig{\overline{Q}}{\overline{Q}} & = 0
	\intertext{and}
	\commutatorbig{P}{Q} = \commutatorbig{P}{\overline{Q}} = \commutatorbig{\overline{P}}{Q} = \commutatorbig{\overline{P}}{\overline{Q}} & = 0 \, ,
\end{align}
\end{subequations}
cf.\ \propsaref{prop:p-cohomological-vector-field}{prop:q-cohomological-vector-field}, \colsaref{col:anti-diffeomorphism_brst_operator}{col:anti-gauge_brst_operator}, \thmref{thm:total_brst_operator} and \colref{col:total_anti-brst_operator}.\footnote{We emphasize that we use the symbol \(\left [ \, \cdot \, , \cdot \, \right ]\) for the supercommutator: In particular, it denotes the anticommutator if both arguments are odd, cf.\ \defnref{defn:supercommutator}.} This is a non-trivial observation, as infinitesimal diffeomorphisms concern all particle fields and thus in particular the operators \(Q\) and \(\overline{Q}\). As a result, their sums
\begin{subequations}
\begin{align}
	D & \coloneq P + Q
	\intertext{and}
	\overline{D} & \coloneq \overline{P} + \overline{Q}
\end{align}
\end{subequations}
are also differentials, which we call \emph{total BRST operator} and \emph{total anti-BRST operator}. This allows us to identify the physical states of the theory as elements in the respective zeroth (co)homology classes. Furthermore, we show that the sum of the gauge fixing fermion \(\stigma^{(1)}\) for the linearized de Donder gauge fixing and the gauge fixing fermion \(\digamma \! \! _{\{ 1 \}}\) for the covariant Lorenz gauge fixing,
\begin{equation}
	\Upsilon \coloneq \stigma^{(1)} + \digamma \! \! _{\{ 1 \}} \, ,
\end{equation}
is again a gauge fixing fermion, which we call \emph{total gauge fixing fermion}, cf.\ \thmref{thm:total_gauge_fixing_fermion}.\footnote{We remark that this works independently of the chosen gauge fixing conditions, as long as the gauge fixing fermion of the gauge theory is a tensor density of weight \(w = 1\).} In particular, we obtain the complete gauge fixing and ghost Lagrange density of (effective) Quantum General Relativity coupled to the Standard Model via \(D \Upsilon\). We believe that this analysis provides an important contribution to the quantization of gravity coupled to gauge theories. Additionally, we relate the BRST cocomplexes to the anti-BRST complexes via the ghost-conjugation involutions in \thmref{thm:relation_cocomplex_complex}:
\begin{subequations}
\begin{align}
	\bigl ( \CQ^{i,j}, P^{i,j} \bigr )^{\dagger_C} & \cong \bigl ( \CQ_{-i}^j, \overline{P}_{-i}^j \bigr ) \, , \\
	\bigl ( \CQ^{i,j}, Q^{i,j} \bigr )^{\dagger_c} & \cong \bigl ( \CQ^i_{-j}, \overline{Q}^i_{-j} \bigr )
	\intertext{and}
	\bigl ( \CQ^k, D^k \bigr )^\dagger & \cong \bigl ( \CQ_{-k}, \overline{D}_{-k} \bigr ) \, ,
\end{align}
\end{subequations}
for all \(i, j, k \in \mathbb{Z}\), where \(i\) denotes the graviton-ghost degree, \(j\) the gauge ghost degree and \(k\) the total ghost degree. In particular, this relates the cohomology of the BRST operators to the homology of the anti-BRST operators, as we notice in \colref{col:relation_cohomology_homology}:
\begin{subequations}
\begin{align}
	H^{i,j} \bigl ( P \bigr ) & \cong H_{-i}^j \bigl ( \overline{P} \bigr ) \, , \\
	H^{i,j} \bigl ( Q \bigr ) & \cong H^i_{-j} \bigl ( \overline{Q} \bigr )
	\intertext{and}
	H^k \bigl ( D \bigr ) & \cong H_{-k} \bigl ( \overline{D} \bigr ) \, ,
\end{align}
\end{subequations}
for all \(i, j, k \in \mathbb{Z}\), where again \(i\) denotes the graviton-ghost degree, \(j\) the gauge ghost degree and \(k\) the total ghost degree. Finally, we consider the sign-twisted anti-BRST operators
\begin{equation}
	\widetilde{S}^l \coloneq \left ( -1 \right )^l \overline{S}^l \, ,
\end{equation}
where \(\overline{S} \in \setbig{\overline{P}, \overline{Q}, \overline{D}}\) denotes any of the three anti-BRST operators and \(l \in \mathbb{Z}\) the respective degree. Then, we show the following in \propref{prop:cochain_chain_homotopy}: The sign-twisted anti-BRST operators \(\widetilde{S}\) are cochain homotopies for the BRST cocomplex between the maps \(\widetilde{\mathcal{S}} \coloneq S \circ \widetilde{S}\) and the zero map, where \(S \in \setbig{P, Q, D}\) is the corresponding BRST operator. Likewise, the BRST operators are a chain homotopy for the sign-twisted anti-BRST complex between said maps. This enhances our understanding of gauge fixing fermions coming from gauge fixing bosons in the sense of \eqnref{eqn:gauge-fixing-boson}, as will be discussed in \cite{Prinz_6}.

In this article, we consider (effective) Quantum General Relativity coupled to the Standard Model, given via the following Lagrange density:
\begin{equation}
	\mathcal{L}_\text{QGR-SM} \coloneq \mathcal{L}_\text{QGR} + \mathcal{L}_\text{QYM} + \mathcal{L}_\text{Matter}
\end{equation}
Specifically, the Lagrange density for (effective) Quantum General Relativity is given as follows:
\begin{equation} \label{eqn:qgr_lagrange_density_introduction}
\begin{split}
	\mathcal{L}_\text{QGR} & \coloneq \mathcal{L}_\text{GR} + \mathcal{L}_\text{GR-GF} + \mathcal{L}_\text{GR-Ghost} \\ & \phantom{:} = - \frac{1}{2 \varkappa^2} \left ( \sqrt{- \dt{g}} R + \frac{1}{2 \zeta} \eta^{\mu \nu} \deDonder^{(1)}_\mu \deDonder^{(1)}_\nu \right ) \dif V_\eta \\ & \phantom{\coloneq} - \frac{1}{2} \eta^{\mu \nu} \overline{C}^\rho \left ( \frac{1}{\zeta} \left ( \partial_\mu \partial_\nu C_\rho \right ) + \partial_\rho \bigl ( \Gamma_{\sigma \mu \nu} C^\sigma \bigr ) - 2 \partial_\mu \bigl ( \Gamma_{\sigma \nu \rho} C^\sigma \bigr ) \right ) \dif V_\eta
\end{split}
\end{equation}
In particular, we consider the metric expansion \(g_{\mu \nu} \equiv \eta_{\mu \nu} + \varkappa h_{\mu \nu}\), where \(h_{\mu \nu}\) is the graviton field and \(\varkappa \coloneq \sqrt{\kappa}\) the graviton coupling constant (with \(\kappa \coloneq 8 \pi G\) the Einstein gravitational constant). In addition, \(R \coloneq g^{\nu \sigma} \tensor{R}{^\mu _\nu _\mu _\sigma}\) is the Ricci scalar with \(\tensor{R}{^\rho _\sigma _\mu _\nu} \coloneq \partial_\mu \tensor{\Gamma}{^\rho _\nu _\sigma} - \partial_\nu \tensor{\Gamma}{^\rho _\mu _\sigma} + \tensor{\Gamma}{^\rho _\mu _\lambda} \tensor{\Gamma}{^\lambda _\nu _\sigma} - \tensor{\Gamma}{^\rho _\nu _\lambda} \tensor{\Gamma}{^\lambda _\mu _\sigma}\) the Riemann tensor and \(\tensor{\Gamma}{^\rho _\mu _\nu} \coloneq g^{\rho \sigma} \textfrac{\left ( \partial_\mu g_{\sigma \nu} + \partial_\nu g_{\mu \sigma} - \partial_\sigma g_{\mu \nu} \right )}{2}\) the Christoffel symbol for the Levi-Civita connection. Furthermore, \(\dif V_\eta \coloneq \dif t \wedge \dif x \wedge \dif y \wedge \dif z\) denotes the Minkowskian volume form, which is related to the Riemannian volume form via \(\dif V_g\) via \(\dif V_g \equiv \sqrt{- \dt{g}} \dif V_\eta\). Moreover, \(\deDonder^{(1)}_\mu \coloneq \eta^{\rho \sigma} \Gamma_{\mu \rho \sigma} \equiv 0\) is the linearized de Donder gauge fixing functional with \(\Gamma_{\mu \rho \sigma} \coloneq \textfrac{\varkappa \left ( \partial_\rho h_{\mu \sigma} + \partial_\sigma h_{\rho \mu} - \partial_\mu h_{\rho \sigma} \right )}{2}\) and \(\zeta\) denotes the gauge fixing parameter. Finally, \(C_\rho\) and \(\overline{C}^\rho\) are the graviton-ghost and graviton-antighost, respectively. We refer to \cite{Prinz_2,Prinz_4} for more detailed introductions and further comments on the chosen conventions. Then, additionally, the Lagrange density for Quantum Yang--Mills theory is given as follows:
\begin{equation} \label{eqn:qym_lagrange_density_introduction}
\begin{split}
	\mathcal{L}_\text{QYM} & \coloneq \mathcal{L}_\text{YM} + \mathcal{L}_\text{YM-GF} + \mathcal{L}_\text{YM-Ghost} \\ & \phantom{:} = - \frac{1}{2 \mathrm{g}^2} \delta_{a b} \left ( \frac{1}{2} g^{\mu \nu} g^{\rho \sigma} F^a_{\mu \rho} F^b_{\nu \sigma} + \frac{1}{\xi} \Lorenz^a \Lorenz^b \right ) \dif V_g \\
	& \phantom{\coloneq} - g^{\mu \nu} \left ( \frac{1}{\xi} \overline{c}_a \bigl ( \nabla^{TM}_\mu \partial_\nu c^a \bigr ) + \mathrm{g} \tensor{f}{^a _b _c} \overline{c}_a \left ( \nabla^{TM}_\mu \bigl ( c^b A^c_\nu \bigr ) \right ) \right ) \dif V_g
\end{split}
\end{equation}
We remark that \(F^a_{\mu \nu} \coloneq \mathrm{g} \bigl ( \partial_\mu A^a_\nu - \partial_\nu A^a_\mu \bigr ) - \mathrm{g}^2 \tensor{f}{^a _b _c} A^b_\mu A^c_\nu\) is the local curvature form of the gauge boson \(A^a_\mu\). Additionally, \(\Lorenz^a \coloneq \mathrm{g} g^{\mu \nu} \bigl ( \nabla^{TM}_\mu A^a_\nu \bigr ) \equiv 0\) is the covariant Lorenz gauge fixing functional and \(\xi\) denotes the gauge fixing parameter. Finally, \(c^a\) and \(\overline{c}_a\) are the gauge ghost and gauge antighost, respectively. These two Lagrange densities are then completed with the matter Lagrange densities for a vector of complex scalar fields and a vector of spinor fields, both subjected to the action of the gauge group. Explicitly, the matter Lagrange density is given as follows:
\begin{equation} \label{eqn:matter_lagrange_density_introduction}
	\begin{split}
	\mathcal{L}_\text{Matter} & \coloneq \mathcal{L}_\text{Higgs} + \mathcal{L}_\text{Fermion} \\
	& \phantom{:} = \left ( g^{\mu \nu} \left ( \nabla^{H}_\mu \Phi \right )^\dagger \left ( \nabla^{H}_\nu \Phi \right ) + \sum_{i \in \boldsymbol{I}_\Phi} \frac{\alpha_i}{i!} \bigl ( \Phi^\dagger \Phi \bigr )^i + \overline{\Psi} \left ( \imaginary \slashed{\nabla}^{\boldsymbol{\Sigma} M} - \boldsymbol{m}_\Psi \right ) \Psi \right ) \dif V_g
	\end{split}
\end{equation}
Here, \(\Phi\) and \(\Psi\) denote the respective vectors of complex scalar fields and spinor fields, with corresponding dual vectors \(\Phi^\dagger\) and dual spinors \(\overline{\Psi} \coloneq \left ( \boldsymbol{\gamma}_0 \Psi \right )^\dagger\). Furthermore, \(\nabla^{H}_\mu \coloneq \partial_\mu + \imaginary \mathrm{g} A^a_\mu \mathfrak{H}_a\) and \(\nabla^{\boldsymbol{\Sigma} M}_\mu \coloneq \partial_\mu + \boldsymbol{\varpi}_\mu + \imaginary \mathrm{g} A^a_\mu \mathfrak{S}_a\) denote the respective covariant derivatives, where \(\mathfrak{H}_a\) and \(\mathfrak{S}_a\) denote the infinitesimal actions of the gauge group \(G\) on the Higgs bundle \(H\) and the twisted spinor bundle \(\boldsymbol{\Sigma} M\), respectively, and \(\boldsymbol{\varpi}_\mu\) is the spin connection on the twisted spinor bundle. In addition, \(\slashed{\nabla}^{\boldsymbol{\Sigma} M} \coloneq e^{\mu m} \gamma_m \bigl ( \partial_\mu + \boldsymbol{\varpi}_\mu + \imaginary \mathrm{g} A^a_\mu \mathfrak{S}_a \bigr )\) denotes the corresponding twisted Dirac operator, where \(e^{\mu m}\) is the inverse vielbein and \(\gamma_m\) the Minkowski space Dirac matrix. Moreover, \(\boldsymbol{I}_\Phi\) denotes the set of scalar field interactions, with respective coupling constants \(\alpha_i\) (and possible mass \(\alpha_2 \coloneq - m_\Phi\)). Finally, \(\boldsymbol{m}_\Psi\) denotes the diagonal matrix with all fermion masses as entries. We refer to \cite[Subsection 4.2]{Prinz_4} for a detailed discussion thereon.

We will continue this topic in future work as follows: First we will use the BRST and anti-BRST operators given in this article to derive symmetric Lagrange densities for General Relativity and covariant Yang--Mills theory in \cite{Prinz_6}. Then, we study the transversality of the corresponding quantized theories in \cite{Prinz_7}. In addition, we also considered the case of perturbative Quantum Gravity with a cosmological constant \cite{Prinz_8}. Finally, we will also investigate on the corresponding cancellation identities via the introduction of \emph{perturbative BRST cohomology} in \cite{Prinz_9}: This will be a modified version of the Feynman graph cohomology introduced by Kreimer et al.\ in the realm of the Corolla polynomial \cite{Kreimer_Yeats,Kreimer_Sars_vSuijlekom,Sars_PhD,Prinz_1,Kreimer_Corolla,Berghoff_Knispel}.

This article originates from the author's dissertation \cite{Prinz_PhD}.

\enter

\section{Geometric setup and particle fields} \label{sec:geometric-setup}

We start this article with a section on the geometric underpinnings of the BRST symmetry, specifically graded supergeometry. In this language, the BRST operator can be seen as a cohomological super vector field on the spacetime-matter bundle. Equivalently, it can also be seen as a cohomological superderivation on the algebra of particle fields. Then, we discuss spacetimes and the spacetime-matter bundle as a vector bundle whose sections describe the particle fields of (effective) Quantum General Relativity coupled to the Standard Model. After that, we define metric decompositions and the graviton field. Additionally, we provide a discussion on the diffeomorphism and gauge groups together with their infinitesimal actions. Then, we discuss the geometric background for the relation between BRST and anti-BRST symmetries. To this end, we start with the ghost conjugation as a Hermitian involution on the space of graded superfunctions. Finally, we explain the geometric reason why the BRST operators anticommute with the anti-BRST operators, as will be proven in the following sections.

\enter

\begin{defn}[\(\mathbb{Z}^2\)-graded supermanifold] \label{defn:ztwo-graded-supermanifold}
	Let \(\mathcal{M}\) be a topological manifold. We call \(\mathcal{M}\) a \(\mathbb{Z}^2\)-graded supermanifold, if it is isomorphic to a vector bundle \(\pi \colon \mathcal{M} \to M\) that splits into a direct sum bundle such that the following diagram commutes
	\begin{equation}
	\begin{tikzcd}[row sep=huge]
		\mathcal{M} \arrow[swap]{dr}{\pi} \arrow{rr}{\cong} & & \displaystyle \bigoplus_{(i,j) \in \mathbb{Z}^2} \mathcal{M}_{(i,j)} \arrow{dl}{\tilde{\pi}} \\
		& M &
	\end{tikzcd} \, ,
	\end{equation}
	where \((i,j) \in \mathbb{Z}^2\) denotes the degree of the subbundles and \(\mathcal{M}_{(0,0)} \cong M\), i.e.\ the degree \((0,0)\) is concentrated in the so-called body \(M\). We call the first integer \(i\) the graviton-ghost degree, the second integer \(j\) the gauge ghost degree and their sum \(k \coloneq i + j\) the total ghost degree. Additionally, we call the grading \emph{compatible with the super structure} of \(\mathcal{M}\) if the parity of every subbundle is given via
	\begin{equation}
		p \equiv i + j \quad \text{Mod 2} \, , \label{eqn:compatibility_grading_super_structure}
	\end{equation}
	where \(0 \in \mathbb{Z}_2\) denotes even coordinates and \(1 \in \mathbb{Z}_2\) denotes odd coordinates. Concretely, on the level of graded super functions \(\mathcal{C} \left ( \mathcal{U} \right )\) for \(\mathcal{U} \subseteq \mathcal{M}\) this means that
	\begin{equation}
		\mathcal{C} \bigl ( \mathcal{U}_{(i,j)} \bigr ) \cong \begin{cases} C^\infty \bigl ( \mathcal{U}_{(0,0)} \bigr ) & \text{if \((i,j) = (0,0)\)} \\ \mathcal{S} \bigl ( \mathcal{U}_{(i,j)} \bigr ) & \text{if \(p = 0\)} \\ \mathcal{A} \bigl ( \mathcal{U}_{(i,j)} \bigr ) & \text{if \(p = 1\)} \end{cases} \, ,
	\end{equation}
	where \(\mathcal{U}_{(i,j)} \subseteq \mathcal{M}_{(i,j)}\) is an open subset, \(C^\infty \bigl ( \mathcal{U}_{(0,0)} \bigr )\) denotes smooth functions on \(U \subseteq M\), \(\mathcal{S} \bigl ( \mathcal{U}_{(i,j)} \bigr )\) denotes symmetric formal power series (i.e.\ a formal power series in commuting variables) and \(\mathcal{A} \bigl ( \mathcal{U}_{(i,j)} \bigr )\) denotes antisymmetric formal power series (i.e.\ a formal power series in anticommuting variables). Finally, we define the grade shift via
	\begin{equation}
		\mathcal{M}_{(i,j)} [m,n] \coloneq \mathcal{M}_{(i+m,j+n)} \, ,
	\end{equation}
	which additionally implies a potential shift in parity according to \eqnref{eqn:compatibility_grading_super_structure}. We refer to \cite{Voronov} for more details in this direction.
\end{defn}

\enter

\begin{defn}[Supercommutator] \label{defn:supercommutator}
	Let \(\mathcal{M}\) be a supermanifold and \(\boldsymbol{X}_1, \boldsymbol{X}_2 \in \mathfrak{X} \left ( \mathcal{M} \right )\) be two super vector fields of distinct parities \(p_1, p_2 \in \mathbb{Z}_2\). Then we introduce the supercommutator as follows:
	\begin{equation}
		\commutatorbig{\boldsymbol{X}_1}{\boldsymbol{X}_2} \coloneq \boldsymbol{X}_1 \left ( \boldsymbol{X}_2 \right ) - \left ( -1 \right )^{p_1 p_2} \boldsymbol{X}_2 \left ( \boldsymbol{X}_1 \right )
	\end{equation}
	This turns the module \((\mathfrak{X} \left ( \mathcal{M} \right ) \! , \left [ \, \cdot \, , \cdot \, \right ])\) into a Lie superalgebra.
\end{defn}

\enter

\begin{ass}
	In the following, we assume that the grading is compatible with the super structure in the sense of \eqnref{eqn:compatibility_grading_super_structure}, such that the parity is implied by the grading.
\end{ass}

\enter

\begin{defn}[Homological and cohomological vector fields]
	Let \(\mathcal{M}\) be a \(\mathbb{Z}\)-graded supermanifold with compatible super structure. We denote the subspace of pure super vector fields \(\boldsymbol{X}^\mu\) with degree \(z \in \mathbb{Z}\) by \(\mathfrak{X}_{(z)} \left ( \mathcal{M} \right )\). Then an odd vector field \(\Xi \in \mathfrak{X} \left ( \mathcal{M} \right )\) with the property
	\begin{equation}
	\commutatorbig{\Xi}{\Xi} \equiv 2 \, \Xi^2 \equiv 0
	\end{equation}
	is called homological if \(\Xi \in \mathfrak{X}_{(-1)} \left ( \mathcal{M} \right )\) and cohomological if \(\Xi \in \mathfrak{X}_{(1)} \left ( \mathcal{M} \right )\). This turns \((\mathcal{C}_\bullet \left ( \mathcal{M} \right ),\Xi)\) into a (co)chain complex and the pair \((\mathcal{M}, \Xi)\) is called a differential-graded manifold.
\end{defn}

\enter

\begin{exmp}
	Let \(M\) be a manifold with \(\Omega^\bullet \left ( M \right ) \coloneq \Gamma \left ( M, \bigwedge^\bullet TM \right )\) its sheaf of differential forms. Let furthermore \(\mathcal{M} \coloneq T[1]M\) denote its degree shifted tangent bundle. Then we can identify \(\mathcal{C}^\bullet \left ( \mathcal{M} \right ) \cong \Omega^\bullet \left ( M \right )\), where the grading is now given by the form degree. With this, we obtain a cohomological vector field via the de Rham differential \(\mathrm{d} \in \mathfrak{X}_{(1)} \left ( \mathcal{M} \right )\) and a homological vector field via the de Rham codifferential \(\delta \in \mathfrak{X}_{(-1)} \left ( \mathcal{M} \right )\).
\end{exmp}

\enter

\begin{defn}[Spacetime] \label{def:spacetime}
	Let \((M,g)\) be a \(d\)-dimensional Lorentzian manifold. We call \((M,g)\) a spacetime, if it is smooth, connected and time-orientable.
\end{defn}

\enter

\begin{defn}[Spacetime-matter bundle] \label{defn:spacetime-matter_bundle}
	Let \((M, g)\) be a \(d\)-dimensional spacetime. Then we define the spacetime-matter bundle of (effective) Quantum General Relativity coupled to the Standard Model as the globally trivial \(\mathbb{Z}^2\)-graded super bundle \(\beta_\Q \colon \BQ \to M\), where \(\BQ \coloneq M \times_M \mathcal{V}_\Q\) is the fiber product over \(M\) with
	\begin{equation}
	\begin{split}
		\mathcal{V}_\Q & \coloneq \left ( \operatorname{Sym}^2_\mathbb{R} \left ( T^* M \right ) \right )^{\times 3} \times \left ( T^* M \otimes_\mathbb{R} E \right ) \times \Bigl ( T^* [1,0] M \oplus T [-1,0] M \oplus T M \Bigr ) \\ & \phantom{\coloneq} \times \left ( T^*M \otimes_\mathbb{R} \mathfrak{g} \right ) \times \left ( G \times_\rho \left ( H^{(i)} \oplus \Sigma M^{\oplus j} \right ) \right ) \times \Bigl ( \mathfrak{g} [0,1] \oplus \mathfrak{g}^* [0,-1] \oplus \mathfrak{g}^* \Bigr ) \, , \label{eqn:spacetime-matter_bundle}
	\end{split}
	\end{equation}
	where we have the following affine bundles:
	\begin{itemize}
		\item Metric, background metric and graviton field as a section in the triple Cartesian product \(\bigl ( \operatorname{Sym}^2_\mathbb{R} \left ( T^* M \right ) \bigr )^{\times 3} \coloneq \bigtimes_{m = 1}^3 \bigl ( \operatorname{Sym}^2_\mathbb{R} \left ( T^* M \right ) \bigr )\), where \(\operatorname{Sym}^2_\mathbb{R} \left ( T^* M \right ) \coloneq \left ( T^* M \otimes_\mathbb{R} T^* M \right ) / \mathbb{Z}_2\) is the symmetrized tensor product
		\item Vielbein field as a section in \(T^* M \otimes_\mathbb{R} E\), where \(E\) is a real \(d\)-dimensional vector bundle
		\item Graviton-ghost as a section in \(T^* [1,0] M\)
		\item Graviton-antighost as a section in \(T [-1,0] M\)
		\item Graviton-Lautrup--Nakanishi field as a section in \(T M\)
		\item Gauge bosons as a section in \(T^*M \otimes_\mathbb{R} \mathfrak{g}\)
		\item Higgs and Goldstone bosons as sections in the fiber \(G \times_{\rho_H} H^{(i)}\), where \(\rho_H \colon G \to H^{(i)}\) is the action of the gauge group \(G\) on the Higgs bundle \(H^{(i)} \coloneq \mathbb{C}^i\)
		\item Fermion families as sections in \(G \times_{\rho_{\Sigma M}} \Sigma M^{\oplus j}\), where \(\rho_{\Sigma M} \colon G \to \Sigma M^{\oplus j}\) is the action of the gauge group \(G\) on the Whitney sum of \(j\) spinor bundles \(\Sigma M^{\oplus j} \coloneq \bigoplus_{n = 1}^j \Sigma M\)
		\item Gauge ghost as a section in the bundle with fiber \(\mathfrak{g} [0,1]\)
		\item Gauge antighost as a section in the bundle with fiber \(\mathfrak{g}^* [0,-1]\)
		\item Gauge Lautrup--Nakanishi field as a section in the bundle with fiber \(\mathfrak{g}^*\)
	\end{itemize}
	Here, the ghosts are odd sections of either graviton-ghost degree \(\pm 1\) or gauge ghost degree \(\pm 1\), respectively.
\end{defn}

\enter

\begin{defn}[Sheaf of particle fields] \label{defn:sheaf-of-particle-fields}
	Let \((M, g)\) be a spacetime with topology \(\mathcal{T}_M\) and \(\beta_\Q \colon \BQ \to M\) the spacetime-matter bundle from \defnref{defn:spacetime-matter_bundle}. Then we define the sheaf of particle fields via
	\begin{equation}
		\FQ \, : \quad \mathcal{T}_M \to \Gamma \left ( M, \BQ \right ) \, , \quad U \mapsto \Gamma \left ( U, B \right ) \, ,
	\end{equation}
	where \(B \subset \BQ\) is one of the subbundles from \eqnref{eqn:spacetime-matter_bundle}. More precisely, we consider the following particle fields:
	\begin{itemize}
		\item Lorentzian metrics \(g \in \operatorname{LorMet} \left ( M \right ) \subset \Gamma \bigl ( M, \operatorname{Sym}^2_\mathbb{R} \left ( T^* M \right ) \bigr )\)
		\item Minkowski background metric \(\eta \in \operatorname{LorMet} \left ( M \right ) \subset \Gamma \bigl ( M, \operatorname{Sym}^2_\mathbb{R} \left ( T^* M \right ) \bigr )\)
		\item Graviton field \(\varkappa h \coloneq \left ( g - b \right ) \in \operatorname{Grav} \left ( M \right ) \subset \Gamma \bigl ( M, \operatorname{Sym}^2_\mathbb{R} \left ( T^* M \right ) \bigr )\), where \(\varkappa\) is the graviton coupling constant
		\item Vielbein fields as bundle isomorphisms \(e \in \operatorname{BdlIso}_M \left ( T M, E \right ) \subset \sctnbig{M, T^* M \otimes_\mathbb{R} E}\)
		\item Vector of \(2i\) Higgs and Goldstone fields \(\Phi \in \sctnbig{M, M \times H^{(i)}}\)
		\item Vector of \(j\) fermion fields \(\Psi \in \sctnbig{M, \Sigma M^{\oplus j}}\)
		\item Gauge boson fields \(\imaginary \mathrm{g} A \in \operatorname{Conn} \left ( M, \mathfrak{g} \right ) \subset \Omega^1 \left ( M, \mathfrak{g} \right )\), where \(\imaginary \coloneq \sqrt{-1}\) is the imaginary unit and \(\mathrm{g}\) is the gauge boson coupling constant
		\item Graviton-ghost fields \(C \in \sctnbig{M, T^* [1,0] M}\)
		\item Graviton-antighost fields \(\overline{C} \in \sctnbig{M, T [-1,0] M}\)
		\item Graviton-Lautrup--Nakanishi auxiliary fields \(B \in \mathfrak{X} \left ( M \right )\)
		\item Gauge ghost fields \(c \in \sctnbig{M, M \times \mathfrak{g} [0,1]}\)
		\item Gauge antighost fields \(\overline{c} \in \sctnbig{M, M \times \mathfrak{g}^* [0,-1]}\)
		\item Gauge Lautrup--Nakanishi auxiliary fields \(b \in \sctnbig{M, M \times \mathfrak{g}^*}\)
	\end{itemize}
	Specifically, given a metric \(g_{\mu \nu}\) and the Minkowski background metric \(\eta_{\mu \nu}\), the graviton field \(h_{\mu \nu}\) is then defined as their difference, rescaled by the graviton coupling constant \(\varkappa \coloneq \sqrt{\kappa}\), with \(\kappa \coloneq 8 \pi G\) the Einstein constant and \(G\) the Newton constant:
	\begin{equation} \label{eqn:metric_decomposition}
		h_{\mu \nu} \coloneq \frac{1}{\varkappa} \left ( g_{\mu \nu} - \eta_{\mu \nu} \right ) \iff g_{\mu \nu} \equiv \eta_{\mu \nu} + \varkappa h_{\mu \nu} \, .
	\end{equation}
	Thus, the graviton field \(h_{\mu \nu}\) is given as a rescaled, symmetric \((0,2)\)-tensor field, i.e.\ a section \(\varkappa h \in \Gamma \bigl ( M, \operatorname{Sym}^2_\mathbb{R} \left ( T^* M \right ) \bigr )\).
\end{defn}

\enter

\begin{defn}[Functionals of particle fields]
	Let \(\FQ\) be the space of fields from \defnref{defn:sheaf-of-particle-fields}. Then we define the space of functionals as graded-symmetric polynomials in its corresponding dual space, i.e.\ \(\CQ \coloneq \operatorname{Sym} \bigl ( \FQ^\vee \bigr )\). Furthermore, we denote by \(\CQ^{\{w\}, (i,j)}\) its subspace in graviton-ghost degree \(i \in \mathbb{Z}\), gauge ghost degree \(j \in \mathbb{Z}\) and tensor density weight \(w \in \mathbb{R}\).\footnote{I.e.\ \(\mathfrak{f} \equiv \left ( - \dt{g} \right )^{w/2} f\) for an ordinary covariant functional \(f \in \CQ^{\{0\}, (i,j)}\) in said bidegree.} If we are only interested in the total ghost degree \(k \in \mathbb{Z}\), we also write \(\CQ^{\{w\}, (k)}\). Likewise, if the tensor density weight is irrelevant for the current statement, we simply write \(\CQ^{(i,j)}\) or \(\CQ^{(k)}\), respectively. We emphasize that we do not invert the grading when passing from sections to functionals of sections; rather we consider ghosts and antighosts to always have degree one and minus one, respectively.\footnote{In particular, this corresponds to the fact that we raise and lower indices with metrics in degree zero.}
\end{defn}

\enter

\begin{defn}[Diffeomorphism group and group of gauge transformations] \label{defn:diffeomorphism-group-and-group-of-gauge-transformations}
	Given the situation of \defnref{defn:sheaf-of-particle-fields}, the physical theories that we are studying are invariant under the action of two groups: First, the diffeomorphism group homotopic to the identity \(\mathcal{D} \coloneq \Diff\), which turns out to be compactly supported \cite{Schmeding_PhD}. Secondly, the group of gauge transformations \(\mathcal{G} \coloneq \Gamma \left ( M, M \times G \right )\), where \(G \cong U(1) \times \widetilde{G}\) is the gauge group with \(\widetilde{G}\) a compact and semisimple Lie group. The specific case of the Standard Model is given via \(\widetilde{G}_\text{SM} \cong \bigl ( SU (2) \times SU (3) \bigr ) / N\), where \(N\) a normal discrete subgroup, cf.\ \cite{Saller,Baez,Baez_Huerta,Tong}. Then, the diffeomorphism group homotopic to the identity acts via
	\begin{subequations}
	\begin{align}
		\varrho & \, : \quad \mathcal{D} \times \BQ \to \BQ \quad \left ( \phi, \varphi \right ) \mapsto \phi_* \varphi \, ,
		\intertext{where \(\varrho\) acts naturally on \(M\) and via push-forward on the corresponding particle bundles.\footnotemark \phantom{ } Furthermore, the group of gauge transformations acts fiberwise via}
		\rho & \, : \quad \mathcal{G} \times \BQ \to \BQ \quad \left ( \gamma, \varphi \right ) \mapsto \gamma \cdot \varphi \, ,
	\end{align}
	\end{subequations}
	\footnotetext{The action on the spinor bundle is more involved, as its construction depends crucially on the metric \(g\). We refer to \cite{Mueller_Nowaczyk} for an explicit construction.}%
	\noindent where \(\rho\) acts via the matrix representation on the vectors of Higgs and spinor fields. Additionally, we also consider the action of infinitesimal diffeomorphisms via
	\begin{subequations}
	\begin{align}
		\boldsymbol{\varrho} & \, : \quad \mathfrak{D} \times \BQ \to \BQ \quad \left ( X, \varphi \right ) \mapsto \Lie_X \varphi \, ,
		\intertext{where \(\mathfrak{D} \coloneq \diff \cong \mathfrak{X}_\text{c} \left ( M \right )\) is the Lie algebra of compactly supported vector fields and \(\Lie_X\) denotes the Lie derivative of the geodesic exponential map. Moreover, we also consider the action of infinitesimal gauge transformations via}
		\boldsymbol{\rho} & \, : \quad \mathfrak{G} \times \BQ \to \BQ \quad \left ( Z, \varphi \right ) \mapsto \ell_Z \varphi \, ,
	\end{align}
	\end{subequations}
	where \(\mathfrak{G} \coloneq \Gamma \left ( M, M \times \mathfrak{g} \right )\) is the Lie algebra of \(\mathfrak{g}\)-valued vector fields, with \(\mathfrak{g}\) the Lie algebra of the gauge group \(G\), and \(\ell_Z\) denotes the Lie derivative of the Lie exponential map.
\end{defn}

\enter

\begin{defn}[Transformation under (infinitesimal) diffeomorphisms] \label{defn:transformation_diffeo}
	Given the situation of \defnref{defn:diffeomorphism-group-and-group-of-gauge-transformations}, we define the action of diffeomorphisms \(\phi \in \Diff\) on the graviton field via
	\begin{subequations}
	\begin{align}
		\phi_* \left ( \varkappa h \right ) & \coloneq \phi_* g \, ,
		\intertext{such that the Minkowski background metric can be conveniently defined to be invariant, i.e.}
		\phi_* \eta & \coloneq 0 \, ,
	\end{align}
	\end{subequations}
	and on the other particle fields \(\varphi \in \sctn{M, \BQ }\) as usual, i.e.\ via \(\phi_* \varphi\). In particular, the action of infinitesimal diffeomorphisms is given via the Lie derivative with respect to its generating vector field \(X \in \diff \cong \vectc{M}\), i.e.\
	\begin{subequations}
	\begin{align}
		\boldsymbol{\varrho} \left ( X, h_{\mu \nu} \right ) & \equiv \frac{1}{\varkappa} \left ( \nabla^{TM}_\mu X_\nu + \nabla^{TM}_\nu X_\mu \right ) \, , \\
		\boldsymbol{\varrho} \left ( X, \eta_{\mu \nu} \right ) & \equiv 0
		\intertext{and}
		\boldsymbol{\varrho} \left ( X, \varphi \right ) & \equiv \Lie_X \varphi \, ,
	\end{align}
	\end{subequations}
	where \(\nabla^{TM}\) denotes the covariant derivative with respect to the Levi-Civita connection \(\Gamma\), i.e.\
	\begin{subequations}
	\begin{align}
		\tensor{\Gamma}{^\rho _\mu _\nu} & \coloneq \frac{1}{2} g^{\rho \sigma} \left ( \partial_\mu g_{\sigma \nu} + \partial_\nu g_{\mu \sigma} - \partial_\sigma g_{\mu \nu} \right )
		\intertext{and}
		\Gamma_{\sigma \mu \nu} & \coloneq \frac{1}{2} \left ( \partial_\mu g_{\sigma \nu} + \partial_\nu g_{\mu \sigma} - \partial_\sigma g_{\mu \nu} \right ) \, .
	\end{align}
	\end{subequations}
\end{defn}

\enter

\begin{rem}
	In \defnref{defn:transformation_diffeo} we have only considered diffeomorphisms homotopic to the identity, as they are generated by flows of compactly supported vector fields \(X \in \diff \cong \vectc{M}\) and thus allow for an infinitesimal picture. Notably, they differ from the identity only on compactly supported domains. Thus, diffeomorphisms homotopic to the identity preserve the asymptotic structure of spacetimes. We remark that, different from finite dimensional Lie groups, the Lie exponential map
	\begin{subequations}
	\begin{align}
		\operatorname{exp} \, & : \quad \diff \to \Diff
		\intertext{is no longer locally surjective, which leads to the notion of an evolution map}
		\operatorname{Evol} \, & : \quad C^\infty \left ( [0,1], \diff \right ) \to C^\infty \left ( [0,1], \Diff \right )
	\end{align}
	\end{subequations}
	that maps smooth curves in the Lie algebra to smooth curves in the corresponding Lie group. Specifically, the Lie exponential map is contained in the evolution map by considering constant maps. We refer to \cite{Schmeding_PhD,Prinz_Schmeding_1,Prinz_Schmeding_2} for further details.
\end{rem}

\enter

\begin{defn}[Ghost conjugation, anti-Hermitian auxiliary field] \label{defn:ghost-conjugation}
	Given the situation of \defnref{defn:sheaf-of-particle-fields}, we introduce the following three Hermitian involutions on the space of particle fields \(\FQ\), which we then multiplicatively extend to functionals: First, the graviton-ghost conjugation \(\dagger_C\) and the gauge ghost conjugation \(\dagger_c\) via (\(\varphi\) denotes again any other particle field):
	{\allowdisplaybreaks
	\begin{subequations}
	\begin{align}
		& \bigl ( C^\rho \bigr )^{\dagger_C} \coloneq \overline{C}^\rho && \bigl ( C^\rho \bigr )^{\dagger_c} \coloneq C^\rho \\
		& \bigl ( \overline{C}^\rho \bigr )^{\dagger_C} \coloneq C^\rho && \bigl ( \overline{C}^\rho \bigr )^{\dagger_c} \coloneq \overline{C}^\rho \\
		& \bigl ( B^\rho \bigr )^{\dagger_C} \coloneq - B^\rho - \varkappa \zeta \left ( \overline{C}^\sigma \bigl ( \partial_\sigma C^\rho \bigr ) - \bigl ( \partial_\sigma \overline{C}^\rho \bigr ) C^\sigma \right ) && \bigl ( B^\rho \bigr )^{\dagger_c} \coloneq B^\rho \\
		& \bigl ( {B^\prime}^\rho \bigr )^{\dagger_C} \coloneq - {B^\prime}^\rho && \bigl ( {B^\prime}^\rho \bigr )^{\dagger_c} \coloneq {B^\prime}^\rho \\
		& \bigl ( c^a \bigr )^{\dagger_C} \coloneq c^a && \bigl ( c^a \bigr )^{\dagger_c} \coloneq \overline{c}^a \\
		& \bigl ( \overline{c}^a \bigr )^{\dagger_C} \coloneq \overline{c}^a && \bigl ( \overline{c}^a \bigr )^{\dagger_c} \coloneq c^a \\
		& \bigl ( b^a \bigr )^{\dagger_C} \coloneq b^a && \bigl ( b^a \bigr )^{\dagger_c} \coloneq - b^a - \mathrm{g} \xi \tensor{f}{^a _b _c} \overline{c}^b c^c \\
		& \bigl ( {b^\prime}^a \bigr )^{\dagger_C} \coloneq {b^\prime}^a && \bigl ( {b^\prime}^a \bigr )^{\dagger_c} \coloneq - {b^\prime}^a \\
		& \bigl ( \partial_\mu \bigr )^{\dagger_C} \coloneq - \partial_\mu && \bigl ( \partial_\mu \bigr )^{\dagger_c} \coloneq - \partial_\mu \\
		& \bigl ( \tensor{\Gamma}{^\rho _\mu _\nu} \bigr )^{\dagger_C} \coloneq - \tensor{\Gamma}{^\rho _\mu _\nu} && \bigl ( \tensor{\Gamma}{^\rho _\mu _\nu} \bigr )^{\dagger_c} \coloneq - \tensor{\Gamma}{^\rho _\mu _\nu} \\
		& \bigl ( \imaginary \tensor{f}{^a _b _c} \bigr )^{\dagger_C} \coloneq - \imaginary \tensor{f}{^a _b _c} && \bigl ( \imaginary \tensor{f}{^a _b _c} \bigr )^{\dagger_c} \coloneq - \imaginary \tensor{f}{^a _b _c} \\
		& \bigl ( \varphi \bigr )^{\dagger_C} \coloneq \varphi && \bigl ( \varphi \bigr )^{\dagger_c} \coloneq \varphi
	\end{align}
	\end{subequations}
	}%
	Here, \({B^\prime}^\rho\) and \({b^\prime}^a\) are the shifted anti-Hermitian Lautrup--Nakanishi auxiliary fields, given via
	\begin{subequations}
	\begin{align}
		{B^\prime}^\rho & \coloneq B^\rho - \frac{\varkappa \zeta}{2} \left ( \overline{C}^\sigma \bigl ( \partial_\sigma C^\rho \bigr ) - \bigl ( \partial_\sigma \overline{C}^\rho \bigr ) C^\sigma \right )
		\intertext{and}
		{b^\prime}^a & \coloneq b^a - \frac{\mathrm{g} \xi}{2} \tensor{f}{^a _b _c} \overline{c}^b c^c \, .
	\end{align}
	\end{subequations}
	And then, finally, we introduce the total ghost conjugation \(\dagger\) as follows:
	{\allowdisplaybreaks
	\begin{subequations}
	\begin{align}
		& \bigl ( C^\rho \bigr )^\dagger \coloneq \overline{C}^\rho \\
		& \bigl ( \overline{C}^\rho \bigr )^\dagger \coloneq C^\rho \\
		& \bigl ( B^\rho \bigr )^\dagger \coloneq - B^\rho - \varkappa \zeta \left ( \overline{C}^\sigma \bigl ( \partial_\sigma C^\rho \bigr ) - \bigl ( \partial_\sigma \overline{C}^\rho \bigr ) C^\sigma \right ) \\
		& \bigl ( {B^\prime}^\rho \bigr )^\dagger \coloneq - {B^\prime}^\rho \\
		& \bigl ( c^a \bigr )^\dagger \coloneq \overline{c}^a \\
		& \bigl ( \overline{c}^a \bigr )^\dagger \coloneq c^a \\
		& \bigl ( b^a \bigr )^\dagger \coloneq - b^a - \mathrm{g} \xi \tensor{f}{^a _b _c} \overline{c}^b c^c \\
		& \bigl ( {b^\prime}^a \bigr )^\dagger \coloneq - {b^\prime}^a \\
		& \bigl ( \partial_\mu \bigr )^\dagger \coloneq - \partial_\mu \\
		& \bigl ( \tensor{\Gamma}{^\rho _\mu _\nu} \bigr )^\dagger \coloneq - \tensor{\Gamma}{^\rho _\mu _\nu} \\
		& \bigl ( \imaginary \tensor{f}{^a _b _c} \bigr )^\dagger \coloneq - \imaginary \tensor{f}{^a _b _c} \\
		& \bigl ( \varphi \bigr )^\dagger \coloneq \varphi
	\end{align}
	\end{subequations}
	}%
	In particular, the total ghost conjugation inverts simultaneously graviton-ghosts and gauge ghosts.
\end{defn}

\enter

\begin{rem} \label{rem:anomalies}
	We emphasize that the BRST setup presented in this article operates on a classical level and lifts to the corresponding Quantum Field Theory only if there are no diffeomorphism and gauge anomalies present. Specifically, it is compatible with the path integral quantization if the corresponding Slavnov--Taylor identities are satisfied. This is of course an important assumption, as it directly relates to the multiplicative renormalization of the theory, cf.\ \cite{Prinz_7,Prinz_3}. This will be studied in future work, using the proposed differential-graded renormalization Hopf algebra for (generalized) gauge theories \cite{Prinz_7,Prinz_PhD} and the BV formalism \cite{Batalin_Vilkovisky_1,Batalin_Vilkovisky_2}.
\end{rem}

\enter

\section{The diffeomorphism complex} \label{sec:diffeomorphism-brst-complex}

In this section, we study the diffeomorphism BRST operator \(P\) and its corresponding anti-operator \(\overline{P}\) together with the de Donder gauge fixing fermion \(\stigma\) and its linearized variant \(\stigma^{(1)}\).

\enter

\begin{defn} \label{defn:diffeomorphism_brst_operator}
	We define the diffeomorphism BRST operator \(P \in \mathfrak{X}_{(1,0)} \left ( \BQ \right )\) as the following odd vector field on the spacetime-matter bundle with graviton-ghost degree 1:\footnote{Here, \(\nabla^{\Sigma M}_\rho\) denotes the covariant derivative on the twisted spinor bundle, cf.\ e.g.\ \cite[Definition 2.10]{Prinz_2}.}
	\begin{equation}
	\begin{split}
		P & \coloneq \left ( \frac{1}{\zeta} \partial_\mu C_\nu + \frac{1}{\zeta} \partial_\nu C_\mu - 2 C_\rho \tensor{\Gamma}{^\rho _\mu _\nu} \right ) \frac{\partial}{\partial h_{\mu \nu}} + \varkappa C^\rho \left ( \partial_\rho C_\sigma \right ) \frac{\partial}{\partial C_\sigma} + \frac{1}{\zeta} B^\sigma \frac{\partial}{\partial \overline{C}^\sigma} \\
		& \phantom{\coloneq} + \varkappa \left ( C^\rho \bigl ( \partial_\rho A_\mu^a \bigr ) + \left ( \partial_\mu C^\rho \right ) A_\rho^a \right ) \frac{\partial}{\partial A_\mu^a} \\
		& \phantom{\coloneq} + \varkappa C^\rho \left ( \partial_\rho c^a \right ) \frac{\partial}{\partial c^a} + \varkappa C^\rho \left ( \partial_\rho \overline{c}^a \right ) \frac{\partial}{\partial \overline{c}^a} + \varkappa C^\rho \left ( \partial_\rho b^a \right ) \frac{\partial}{\partial b^a} \\
		& \phantom{\coloneq} + \varkappa C^\rho \left ( \partial_\rho \Phi \right ) \frac{\partial}{\partial \Phi} + \varkappa \left (C^\rho  \nabla^{\Sigma M}_\rho \Psi + \frac{\imaginary}{4} \left ( \partial_\mu C_\nu - \partial_\nu C_\mu \right ) e^{\mu m} e^{\nu n} \boldsymbol{\sigma}_{mn} \Psi \right ) \frac{\partial}{\partial \Psi}
	\end{split}
	\end{equation}
	Equivalently, its action on fundamental particle fields is given as follows:
	{\allowdisplaybreaks
	\begin{subequations}
	\begin{align}
		P h_{\mu \nu} & \coloneq \frac{1}{\zeta} \partial_\mu C_\nu + \frac{1}{\zeta} \partial_\nu C_\mu - 2 C_\rho \tensor{\Gamma}{^\rho _\mu _\nu} \\
		P C_\rho & \coloneq \varkappa C^\sigma \left ( \partial_\sigma C_\rho \right ) \\
		P \overline{C}^\rho & \coloneq \frac{1}{\zeta} B^\rho \\
		P B^\rho & \coloneq 0 \\
		P \eta_{\mu \nu} & \coloneq 0 \\
		P \partial_\mu & \coloneq 0 \\
		P \tensor{\Gamma}{^\rho _\mu _\nu} & \coloneq \left ( C^\sigma \bigl ( \partial_\sigma \tensor{\Gamma}{^\rho _\mu _\nu} \bigr ) + \left ( \partial_\mu C^\sigma \right ) \tensor{\Gamma}{^\rho _\sigma _\nu} + \left ( \partial_\nu C^\sigma \right ) \tensor{\Gamma}{^\rho _\mu _\sigma} - \left ( \partial_\sigma C^\rho \right ) \tensor{\Gamma}{^\sigma _\mu _\nu} + \partial_\mu \partial_\nu C^\rho \right ) \\
		P A_\mu^a & \coloneq \varkappa \left ( C^\rho \bigl ( \partial_\rho A_\mu^a \bigr ) + \left ( \partial_\mu C^\rho \right ) A_\rho^a \right ) \\
		P c^a & \coloneq \varkappa C^\rho \left ( \partial_\rho c^a \right ) \\
		P \overline{c}_a & \coloneq \varkappa C^\rho \left ( \partial_\rho \overline{c}^a \right ) \\
		P b_a & \coloneq \varkappa C^\rho \left ( \partial_\rho b^a \right ) \\
		P \delta_{ab} & \coloneq 0 \\
		P \Phi & \coloneq \varkappa C^\rho \left ( \partial_\rho \Phi \right ) \\
		P \Psi & \coloneq \varkappa \left (C^\rho  \nabla^{\Sigma M}_\rho \Psi + \frac{\imaginary}{4} \left ( \partial_\mu C_\nu - \partial_\nu C_\mu \right ) e^{\mu m} e^{\nu n} \boldsymbol{\sigma}_{mn} \Psi \right )
	\end{align}
	\end{subequations}
	}%
	We remark that the action of \(P\) on all fields \(\varphi \notin \bigl \{ h, C, \overline{C}, B, \eta \bigr \}\) is given via the geodesic Lie derivative with respect to \(C\) and rescaled via \(\varkappa\), i.e.\ \(P \varphi \equiv \varkappa \Lie_C \varphi\).\footnote{We remark that the Lie derivative of spinor fields is a non-trivial notion: This is due to the fact that the construction of the spinor bundle depends directly on the metric, which is itself affected by the Lie derivative. We use the formula of Kosmann \cite{Kosmann}, which uses the connection on the spinor bundle. It can be shown, however, that the result is indeed independent of the chosen connection. We remark that this formula can be embedded into the construction of a universal spinor bundle cf.\ \cite{Mueller_Nowaczyk}.}
\end{defn}

\enter

\begin{defn} \label{defn:diffeomorphism_anti-brst_operator}
	Given the situation of \defnref{defn:diffeomorphism_brst_operator}, we additionally define the diffeomorphism anti-BRST operator \(\overline{P} \in \mathfrak{X}_{(-1,0)} \left ( \BQ \right )\) as the following odd vector field on the spacetime-matter bundle with graviton-ghost degree -1:
	\begin{subequations} \label{eqns:diffeomorphism_anti-brst_operator}
	\begin{align}
		\overline{P} & \coloneq \eval{P}_{C \rightsquigarrow \overline{C}} \, ,
		\intertext{together with the following additional changes}
		\overline{P} C_\rho & \coloneq - \frac{1}{\zeta} B_\rho + \varkappa \overline{C}^\sigma \left ( \partial_\sigma C_\rho \right ) + \varkappa \bigl ( \partial_\rho \overline{C}^\sigma \bigr ) C_\sigma \, , \\
		\overline{P} \overline{C}^\rho & \coloneq \varkappa \overline{C}^\sigma \bigl ( \partial_\sigma \overline{C}^\rho \bigr )
		\intertext{and}
		\overline{P} B^\rho & \coloneq \varkappa \overline{C}^\sigma \left ( \partial_\sigma B^\rho \right ) - \varkappa \bigl ( \partial_\sigma \overline{C}^\rho \bigr ) B^\sigma \, .
	\end{align}
	\end{subequations}
	Specifically, the diffeomorphism anti-BRST operator \(\overline{P}\) is related to the diffeomorphism BRST operator \(P\) via the diffeomorphism-ghost conjugation \(\dagger_C\) from \defnref{defn:ghost-conjugation}, as will be shown in \lemref{lem:relation_BRST_anti-BRST_operators}.
\end{defn}

\enter

\begin{prop} \label{prop:p-cohomological-vector-field}
	Given the situation of \defnref{defn:diffeomorphism_brst_operator}, we have
	\begin{equation}
		\commutatorbig{P}{P} \equiv 2 P^2 \equiv 0 \, ,
	\end{equation}
	i.e.\ \(P\) is a cohomological vector field with respect to the graviton-ghost degree.
\end{prop}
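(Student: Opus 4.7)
The plan is to verify $\commutatorbig{P}{P} \equiv 2P^2 \equiv 0$ by evaluating $P^2$ on each of the fundamental generators of $\CQ{\BQ}$ listed in the second display of \defnref{defn:diffeomorphism_brst_operator}: since $P$ is by construction an odd super vector field, hence an odd superderivation, nilpotency on the generators extends automatically to the whole algebra of particle fields by the graded Leibniz rule. Conceptually, $P$ plays the role of the Chevalley--Eilenberg-type differential for the Lie algebra $\mathfrak{D} \cong \vectc{M}$ acting on $\FQ$ through the Lie derivative, with the graviton-ghost $C$ being the corresponding odd generator. The defining formula $PC_\rho = -\varkappa C^\sigma \partial_\sigma C_\rho$ is calibrated so that, on odd $C$, the expression $\tfrac{1}{2}[C,C] = C^\sigma \partial_\sigma C$ reproduces the commutator of two infinitesimal diffeomorphisms; this is the only algebraic obstruction to $P^2 = 0$ in such a setup.

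The trivial generators are handled first: $P^2\eta_{\mu\nu} = P^2\partial_\mu = P^2\delta_{ab} = 0$ since these fields are $P$-invariant, $P^2 B^\rho = 0$ because $PB = 0$, and $P^2\overline{C}^\rho = P(B^\rho/\zeta) = 0$ by the same token. For the fields $c^a, \overline{c}_a, b^a$ and the scalar $\Phi$ the action of $P$ is $\varkappa$ times the Lie derivative along $C$, so $P^2\varphi$ reduces to checking $\Lie_C \Lie_C \varphi = 0$; using that $P$ commutes with $\partial_\mu$ as an operator on $\FQ$ (since $P\partial_\mu = 0$) and expanding via Leibniz, the contribution $(PC^\sigma)(\partial_\sigma\varphi) = -\varkappa C^\tau (\partial_\tau C^\sigma)(\partial_\sigma \varphi)$ cancels against the Hessian term $\varkappa C^\sigma C^\tau \partial_\sigma \partial_\tau \varphi$ after relabelling, the cancellation being enforced by the antisymmetry of $C^\sigma C^\tau$ for two odd ghost insertions. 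The gauge field $A_\mu^a$, the Christoffel symbol $\tensor{\Gamma}{^\rho_\mu_\nu}$ and the spinor $\Psi$ are treated identically: the extra tensorial leg or Kosmann spin-connection term packages into a genuine $\Lie_{[C,C]/2}$, which vanishes by the same odd-Jacobi argument. The self-consistency $P^2 C_\rho = 0$ is precisely the graded Jacobi identity for odd vector fields and is checked by expanding $P(-\varkappa C^\sigma \partial_\sigma C_\rho)$ directly and relabelling the two summation indices carrying odd factors.

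The substantive obstacle, and what motivates the detailed form of \defnref{defn:diffeomorphism_brst_operator}, is the graviton $h_{\mu\nu}$ itself. Its BRST variation consists of two pieces, the flat piece $\tfrac{1}{\zeta}(\partial_\mu C_\nu + \partial_\nu C_\mu)$ and the connection piece $-2\tensor{\Gamma}{^\rho_\mu_\nu} C_\rho$. Applying $P$ once more generates a contribution from $P$ acting on $C$ in the flat piece, a contribution from $P\tensor{\Gamma}{^\rho_\mu_\nu}$, a contribution from $PC_\rho$ in the connection piece, and mixed cross-terms. The plan is to feed in the prescribed formula for $P\tensor{\Gamma}{^\rho_\mu_\nu}$, which is exactly the transformation law of a linear connection under an infinitesimal diffeomorphism and contains the inhomogeneous second-derivative piece $\partial_\mu\partial_\nu C^\rho$; this inhomogeneous term is precisely what cancels the double-derivative contributions produced by $P$ acting on the ghost derivatives in the flat piece, while the remaining terms reorganise into an antisymmetric double-ghost expression that vanishes by the same Jacobi-type argument used above. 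The technical heart of the proof is thus a careful consistency check between the declared $P\tensor{\Gamma}{^\rho_\mu_\nu}$ and the one induced from $Ph$ via $g \equiv \eta + \varkappa h$, $P\eta = 0$ and the Levi-Civita formula; once this consistency is established, and the relative weighting of the flat and connection pieces is reconciled with the declared $PC_\rho$, the graviton calculation closes and the proposition follows.
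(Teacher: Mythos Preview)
Your plan coincides with the paper's one-line proof, which simply states that the result ``follows immediately after a short calculation using the Jacobi identity''; your generator-by-generator verification is precisely that calculation spelled out. One small slip in your scalar-field sketch: the Hessian term \(C^\sigma C^\tau\partial_\sigma\partial_\tau\varphi\) vanishes on its own (antisymmetric ghosts against the symmetric second derivative), and it is the remaining first-derivative piece \(C^\rho(\partial_\rho C^\sigma)(\partial_\sigma\varphi)\) arising from \(-C^\rho\partial_\rho(P\varphi)\) that must be paired with \((PC^\sigma)(\partial_\sigma\varphi)\).
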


\begin{proof}
	This follows immediately after a short calculation using the Jacobi identity.
\end{proof}

\enter

\begin{col} \label{col:anti-diffeomorphism_brst_operator}
	Given the situation of \defnref{defn:diffeomorphism_anti-brst_operator}, we have
	\begin{align}
		\commutatorbig{\overline{P}}{\overline{P}} & \equiv 2 \overline{P}^2 \equiv 0
		\intertext{and}
		\commutatorbig{P}{\overline{P}} & \equiv P \circ \overline{P} + \overline{P} \circ P \equiv 0 \, ,
	\end{align}
	i.e.\ \(\overline{P}\) is a homological vector field with respect to the graviton-ghost degree that anticommutes with \(P\).
\end{col}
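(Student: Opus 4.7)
The plan is to mirror the short Jacobi-identity argument used in \propref{prop:p-cohomological-vector-field} and carry it out in two stages: first I would establish \(\overline{P}^2 \equiv 0\), and then \(\commutatorbig{P}{\overline{P}} \equiv 0\). In each stage the fundamental particle fields naturally split into two groups: the matter sector (all fields other than \(h, C, \overline{C}, B, \eta\)), on which \(P\) and \(\overline{P}\) act as rescaled Lie derivatives along \(C\) and \(\overline{C}\) respectively, and the graviton sector \((h, C, \overline{C}, B)\), where the actions are prescribed by the modified formulas of \defnref{defn:diffeomorphism_anti-brst_operator}.

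For \(\overline{P}^2 \equiv 0\) I would first dispose of the matter sector. There \(\overline{P}\varphi = \varkappa \Lie_{\overline{C}}\varphi\), so \(\overline{P}^2\varphi\) expands by the graded Leibniz rule into a double Lie derivative along \(\overline{C}\) together with a term proportional to \(\overline{P}\overline{C}\cdot\partial\varphi\). Exactly as in \propref{prop:p-cohomological-vector-field}, the Jacobi identity combined with \(\overline{P}\overline{C}^\rho = -\varkappa\overline{C}^\sigma\partial_\sigma\overline{C}^\rho\) makes these two contributions cancel. On the graviton sector I would then verify \(\overline{P}^2 h\), \(\overline{P}^2 C\), \(\overline{P}^2 \overline{C}\) and \(\overline{P}^2 B\) by direct substitution of the explicit definitions; the non-Lie-derivative pieces of \(\overline{P}C\) and \(\overline{P}B\) in \defnref{defn:diffeomorphism_anti-brst_operator} are engineered precisely so that the \(B\)-dependent contributions close each case.

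For \(\commutatorbig{P}{\overline{P}} \equiv 0\) I would again begin with the matter sector. Applying both operators in sequence and expanding using the odd Leibniz rule, the double Lie-derivative piece \(\varkappa^2\commutatorbig{\Lie_C}{\Lie_{\overline{C}}}\varphi\) vanishes by the graded Jacobi identity for Lie derivatives of odd vector fields, while the remaining cross-terms are fed by \(P\overline{C}\propto B\) on one side and by the \(B\)-contribution to \(\overline{P}C\) on the other, which are constructed to cancel pairwise. On the graviton sector I would verify the four cases \(h\), \(C\), \(\overline{C}\) and \(B\) separately; the combination \(PB \equiv 0\) together with the explicit form of \(\overline{P}B\) in \defnref{defn:diffeomorphism_anti-brst_operator} is exactly what closes the last case.

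The main obstacle I anticipate is the sign bookkeeping in the graviton sector, specifically for \(\overline{P}^2 C\) and for \(\commutatorbig{P}{\overline{P}} B\). In both, the cancellation relies on the exact coefficients of the non-Lie-derivative pieces chosen in \defnref{defn:diffeomorphism_anti-brst_operator}; once these are correctly tracked, everything else reduces to the same Jacobi-type manipulation that already proves \propref{prop:p-cohomological-vector-field}.
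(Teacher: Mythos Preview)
Your proposal is correct and follows essentially the same approach as the paper: the paper's proof simply states that the result is shown analogously to \propref{prop:p-cohomological-vector-field}, i.e.\ via a short calculation using the Jacobi identity, and your plan is precisely a fleshed-out version of that argument, splitting into the matter sector (handled by the graded Jacobi identity for Lie derivatives) and the graviton sector (handled by direct substitution of the modified formulas from \defnref{defn:diffeomorphism_anti-brst_operator}).
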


\begin{proof}
	This statement can be shown analogously to \propref{prop:p-cohomological-vector-field}.
\end{proof}

\enter

\begin{lem} \label{lem:p_tensor_densities}
	Let \(\mathfrak{f} \in \CQ^{\{w\}, (i,j)}\) be a non-constant covariant functional in ghost-bidegree \((i,j) \in \mathbb{Z}^2\) and tensor density weight \(w \in \mathbb{R}\). Then if \(\mathfrak{f}\) does not involve graviton-ghosts \(C\) and graviton-antighosts \(\overline{C}\), the following statements are equivalent:\footnote{Generally, the statement does not hold for the gravitational Koszul--Tate resolution of the complex: Specifically, we have \(P \overline{C}^\rho = B^\rho\) and \(\overline{P} C_\rho = - B_\rho / \zeta + \varkappa \overline{C}^\sigma \bigl ( \partial_\sigma C_\rho \bigr ) + \varkappa \bigl ( \partial_\rho \overline{C}^\sigma \bigr ) C_\sigma\), and thus the parts involving the gravitational-Lautrup--Nakanishi field do not turn into total derivatives. In particular, for statement 2 we could furthermore allow graviton-ghosts, whereas for statement 3 we could furthermore allow graviton-antighosts.}
	\begin{enumerate}
		\item \(w = 1\)
		\item \(P \mathfrak{f} \simeq_\textup{TD} 0\)
		\item \(\overline{P} \mathfrak{f} \simeq_\textup{TD} 0\)
	\end{enumerate}
	In particular, every functional can be modified uniquely to satisfy condition 1., where, \(\simeq_\textup{TD}\) means equality modulo total derivatives.
\end{lem}

\begin{proof}
	The equivalence between statements 1 and 2 follows directly from the calculation
	\begin{equation}
	\begin{split}
		P \mathfrak{f} & = \Lie_C \mathfrak{f} \\
		& = C^\rho \left ( \partial_\rho \mathfrak{f} \right ) + w \left ( \partial_\rho C^\rho \right ) \mathfrak{f} \\
		& = \partial_\rho \left ( C^\rho \mathfrak{f} \right ) + \left ( w - 1 \right ) \left ( \partial_\rho C^\rho \right ) \mathfrak{f} \, ,
		\end{split}
	\end{equation}
	which is a total derivative if and only if \(w = 1\). The equivalence between statements 2 and 3 follows directly from \lemref{lem:relation_BRST_anti-BRST_operators}, which states that \(\overline{P} \equiv P^{\dagger_C}\), where \(\dagger_C\) is the graviton-ghost conjugation. This then also concludes the equivalence between 1 and 3. Finally, the last claim that every covariant functional can be modified uniquely to satisfy condition 1 is due to the following argument: Let \(\mathfrak{f} \in \CQ^{\{w\}, (i,j)}\) be a non-constant covariant functional of tensor density weight \(w\). Then,
	\begin{equation}
		\tilde{\mathfrak{f}} \coloneq \sqrt{- \dt{g}}^{\left ( 1 - w \right )} \mathfrak{f}
	\end{equation}	
	is a non-constant covariant functional of tensor density weight \(w = 1\).
\end{proof}

\enter

\begin{prop} \label{prop:de_donder_gauge_fixing_fermion}
	The Quantum General Relativity gauge fixing Lagrange density and its accompanying ghost Lagrange density
	\begin{equation}
	\begin{split}
		\mathcal{L}_\textup{GR-GF} + \mathcal{L}_\textup{GR-Ghost} & = - \frac{1}{4 \varkappa^2 \zeta}  g^{\mu \nu} \deDonder_\mu \deDonder_\nu \dif V_g \\ & \phantom{=} - \frac{1}{2 \zeta} g_{\rho \sigma} g^{\mu \nu} \overline{C}^\rho \left ( \partial_\mu \partial_\nu C^\sigma \right ) \dif V_g \\ & \phantom{=} - \frac{1}{2} \overline{C}^\rho \left ( \left ( \partial_\sigma \deDonder_\rho \right ) C^\sigma + \deDonder_\sigma \left ( \partial_\rho C^\sigma \right ) \right ) \dif V_g
	\end{split}
	\end{equation}
	for the de Donder gauge fixing functional \(\deDonder_\mu \coloneq g^{\rho \sigma} \Gamma_{\mu \rho \sigma}\) can be obtained from the following gauge fixing fermion \(\stigma \in \CQ^{(-1,0)}\)
	\begin{equation}
		\stigma \coloneq \frac{1}{2} \overline{C}^\rho \left ( \frac{1}{\varkappa} \deDonder_\rho + \frac{1}{2} B_\rho \right ) \dif V_g
	\end{equation}
	via \(P \stigma\).
\end{prop}

\begin{proof}
	The claimed statement follows directly from the calculations
	\begin{subequations}
	\begin{align}
		\begin{split}
			P \stigma & = \frac{1}{2 \zeta} B^\rho \left ( \frac{1}{\varkappa} \deDonder_\rho + \frac{1}{2} B_\rho \right ) \dif V_g  - \frac{1}{2 \varkappa} \overline{C}^\rho \left ( P \deDonder_\rho \right ) \dif V_g \\ & \phantom{=} - \frac{1}{2} \overline{C}^\rho \left ( \frac{1}{\varkappa} \deDonder_\rho + \frac{1}{2} B_\rho \right ) \left ( P \dif V_g \right ) \label{eqn:p_stigma}
		\end{split}
		\intertext{with}
		\begin{split}
			P \deDonder_\rho & = P \left ( g^{\mu \nu} \Gamma_{\rho \mu \nu} \right ) \\
			& = C^\sigma \left ( \partial_\sigma \deDonder_\rho \right ) + \left ( \partial_\rho C^\sigma \right ) \deDonder_\sigma + g_{\rho \sigma} g^{\mu \nu} \left ( \partial_\mu \partial_\nu C^\sigma \right )
		\end{split}
		\intertext{along with the total derivative}
		P \dif V_g & = \partial_\rho \left ( C^\rho \dif V_g \right )
		\intertext{and then finally eliminating the Lautrup--Nakanishi auxiliary field \(B^\rho\) by inserting its equation of motion}
		\operatorname{EoM} \left ( B_\rho \right ) & = - \frac{1}{\varkappa} \deDonder_\rho \, ,
	\intertext{which are obtained as usual via an Euler--Lagrange variation of \eqnref{eqn:p_stigma}, i.e.\ by solving}
	0 & \overset{!}{=} \left ( \left ( \frac{\partial}{\partial B_\rho} \right ) - \partial_\mu \left ( \frac{\partial}{\partial \left ( \partial_\mu B_\rho \right )} \right ) \right ) P \stigma \, ,
	\end{align}
	\end{subequations}
	where the second term vanishes identically, as \(B_\rho\) is a Lagrange multiplier and thus has no kinetic term.
\end{proof}

\enter

\begin{col} \label{col:linearized_de_donder_gauge_fixing_fermion}
	Given the situation of \propref{prop:de_donder_gauge_fixing_fermion}. Then the linearized de Donder gauge fixing and ghost Lagrange densities read
	\begin{equation}
	\begin{split}
		\mathcal{L}_\textup{GR-GF} + \mathcal{L}_\textup{GR-Ghost} & = - \frac{1}{4 \varkappa^2 \zeta}  \eta^{\mu \nu} \deDonder^{(1)}_\mu \deDonder^{(1)}_\nu \dif V_\eta \\ & \phantom{=} - \frac{1}{2 \zeta} \eta^{\mu \nu} \overline{C}^\rho \left ( \partial_\mu \partial_\nu C_\rho \right ) \dif V_\eta \\ & \phantom{=} - \frac{1}{2} \eta^{\mu \nu} \overline{C}^\rho \left ( \partial_\rho \bigl ( \Gamma_{\sigma \mu \nu} C^\sigma \bigr ) - 2 \partial_\mu \bigl ( \Gamma_{\sigma \nu \rho} C^\sigma \bigr ) \right ) \dif V_\eta
	\end{split}
	\end{equation}
	with the linearized de Donder gauge fixing functional \(\deDonder^{(1)}_\mu \coloneq \eta^{\rho \sigma} \Gamma_{\mu \rho \sigma}\). They can be obtained from the following gauge fixing fermion \(\stigma^{(1)} \in \CQ^{(-1,0)}\)
	\begin{equation}
		\stigma^{(1)} \coloneq \frac{1}{2} \overline{C}^\rho \left ( \frac{1}{\varkappa} \deDonder^{(1)}_\rho + \frac{1}{2} B_\rho \right ) \dif V_\eta \label{eqn:linearized_de_donder_gauge_fixing_fermion}
	\end{equation}
	via \(P \stigma^{(1)}\).
\end{col}

\begin{proof}
	This can be shown analogously to the proof of \propref{prop:de_donder_gauge_fixing_fermion}.
\end{proof}

\enter

\begin{rem}
	In the following, we will only use the linearized de Donder gauge fixing and ghost Lagrange densities from \colref{col:linearized_de_donder_gauge_fixing_fermion}. The reason is that the perturbative expansion becomes simpler if the gauge fixing functional does only contribute to the propagator. Nevertheless, the complete de Donder gauge fixing can also be useful, as it does not depend on the choice of a background metric.
\end{rem}

\enter

\section{The gauge complex} \label{sec:gauge-brst-complex}

In this section, we study the gauge BRST operator \(Q\) and its corresponding anti-operator \(\overline{Q}\) together with the Lorenz gauge fixing fermion \(\digamma\) and its density variant \(\digamma \! \! _{\{ 1 \}}\).

\enter

\begin{defn} \label{defn:gauge_brst_operator}
	We define the gauge BRST operator \(Q \in \mathfrak{X}_{(0,1)} \left ( \BQ \right )\) as the following odd vector field on the spacetime-matter bundle with gauge ghost degree 1:
	\begin{equation}
	\begin{split}
		Q & \coloneq \left ( \frac{1}{\xi} \partial_\mu c^a + \mathrm{g} \tensor{f}{^a _b _c} c^b A_\mu^c \right ) \frac{\partial}{\partial A_\mu^a} + \frac{\mathrm{g}}{2} \tensor{f}{^a _b _c} c^b c^c \frac{\partial}{\partial c^a} + \frac{1}{\xi} b_a \frac{\partial}{\partial \overline{c}_a} \\ & \phantom{\coloneq} + \mathrm{g} c^a \left ( \mathfrak{H}_a \cdot \Phi \right ) \frac{\partial}{\partial \Phi} + \mathrm{g} c^a \left ( \mathfrak{S}_a \cdot \Psi \right ) \frac{\partial}{\partial \Psi}
	\end{split}
	\end{equation}
	Equivalently, its action on fundamental particle fields is given as follows:
	{\allowdisplaybreaks
	\begin{subequations}
	\begin{align}
		Q A_\mu^a & \coloneq \frac{1}{\xi} \partial_\mu c^a + \mathrm{g} \tensor{f}{^a _b _c} c^b A_\mu^c \\
		Q c^a & \coloneq \frac{\mathrm{g}}{2} \tensor{f}{^a _b _c} c^b c^c \\
		Q \overline{c}_a & \coloneq \frac{1}{\xi} b_a \\
		Q b_a & \coloneq 0 \\
		Q \delta_{ab} & \coloneq 0 \\
		Q h_{\mu \nu} & \coloneq 0 \\
		Q C_\rho & \coloneq 0 \\
		Q \overline{C}^\rho & \coloneq 0 \\
		Q B^\rho & \coloneq 0 \\
		Q \eta_{\mu \nu} & \coloneq 0 \\
		Q \partial_\mu & \coloneq 0 \\
		Q \tensor{\Gamma}{^\rho _\mu _\nu} & \coloneq 0 \\
		Q \Phi & \coloneq \mathrm{g} c^a \left ( \mathfrak{H}_a \cdot \Phi \right ) \\
		Q \Psi & \coloneq \mathrm{g} c^a \left ( \mathfrak{S}_a \cdot \Psi \right )
	\end{align}
	\end{subequations}
	}%
	We remark that the action of \(Q\) on all fields \(\varphi \notin \bigl \{ A, c, \overline{c}, b, \delta \bigr \}\) is given via the gauge Lie derivative with respect to \(c\) and rescaled via \(\mathrm{g}\), i.e.\ \(Q \varphi \equiv \mathrm{g} \lie_c \varphi\).
\end{defn}

\enter

\begin{defn} \label{defn:gauge_anti-brst_operator}
	Given the situation of \defnref{defn:gauge_brst_operator}, we additionally define the gauge anti-BRST operator \(\overline{Q} \in \mathfrak{X}_{(0,-1)} \left ( \BQ \right )\) as the following odd vector field on the spacetime-matter bundle with gauge ghost degree -1:
	\begin{subequations} \label{eqns:gauge_anti-brst_operator}
	\begin{align}
		\overline{Q} & \coloneq \eval{Q}_{c \rightsquigarrow \overline{c}} \, ,
		\intertext{together with the following additional changes}
		\overline{Q} c^a & \coloneq - \frac{1}{\xi} b^a + \mathrm{g} \tensor{f}{^a _b _c} \overline{c}^b c^c \, , \\
		\overline{Q} \overline{c}_a & \coloneq \frac{\mathrm{g}}{2} \tensor{f}{_a _b _c} \overline{c}^b \overline{c}^c
		\intertext{and}
		\overline{Q} b_a & \coloneq \mathrm{g} \tensor{f}{_a _b _c} \overline{c}^b b^c \, .
	\end{align}
	\end{subequations}
	Specifically, the gauge anti-BRST operator \(\overline{Q}\) is related to the gauge BRST operator \(Q\) via the gauge ghost conjugation \(\dagger_c\) from \defnref{defn:ghost-conjugation}, as will be shown in \lemref{lem:relation_BRST_anti-BRST_operators}.
\end{defn}

\enter

\begin{prop} \label{prop:q-cohomological-vector-field}
	Given the situation of \defnref{defn:gauge_brst_operator}, we have
	\begin{equation}
		\commutatorbig{Q}{Q} \equiv 2 Q^2 \equiv 0 \, ,
	\end{equation}
	i.e.\ \(Q\) is a cohomological vector field with respect to the gauge ghost degree.
\end{prop}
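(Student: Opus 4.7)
The plan is to verify $Q^2 \varphi \equiv 0$ directly on each fundamental particle field $\varphi$ of the spacetime-matter bundle, exploiting the fact that $Q$ extends from its action on generators as an odd graded derivation. Since $Q$ annihilates the entire gravity sector as well as $\eta_{\mu\nu}$, $\partial_\mu$, $\tensor{\Gamma}{^\rho_\mu_\nu}$, and $\delta_{ab}$, nilpotency on those generators is trivial, and the problem reduces to the six non-trivial cases $A_\mu^a$, $c^a$, $\overline{c}_a$, $b_a$, $\Phi$, and $\Psi$.

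Two of these are immediate: $Q^2 \overline{c}_a = Q \big ( \tfrac{1}{\xi} b_a \big ) = 0$ and $Q^2 b_a = Q(0) = 0$. The ghost case $Q^2 c^a = 0$ is the heart of the matter and follows the standard Yang--Mills BRST pattern: applying $Q$ to $-\tfrac{\mathrm{g}}{2}\tensor{f}{^a_b_c} c^b c^c$ and using the odd Leibniz rule produces a term proportional to $\mathrm{g}^2 \tensor{f}{^a_b_c} \tensor{f}{^b_d_e} c^d c^e c^c$, which vanishes after antisymmetrization in $(c,d,e)$ by virtue of the Jacobi identity for the structure constants of $\mathfrak{g}$. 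The gauge boson case $Q^2 A_\mu^a = 0$ then splits into a derivative part, where $\tfrac{1}{\xi}\partial_\mu(Qc^a)$ is cancelled by the $c$-transformation of the $\tfrac{1}{\xi}\partial_\mu c^a$ term, and a purely algebraic part that again collapses via Jacobi. For the matter fields, $Q^2 \Phi = 0$ and $Q^2 \Psi = 0$ reduce to the statement that $\mathfrak{H}$ and $\mathfrak{S}$ are representations, i.e.\ $\commutatorbig{\mathfrak{H}_a}{\mathfrak{H}_b} = \imaginary \tensor{f}{^c_a_b} \mathfrak{H}_c$ and likewise for $\mathfrak{S}$, combined with the anticommutativity $c^a c^b = - c^b c^a$.

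The rescaling by $1/\xi$ in the definitions of $Q A_\mu^a$ and $Q \overline{c}_a$ plays no role in the cancellations, because the antighost-auxiliary sector closes on itself and the $\tfrac{1}{\xi}\partial_\mu c^a$ contribution to $Q^2 A_\mu^a$ is killed already at the level of $Q^2 c^a = 0$ composed with the outer derivative. Thus the argument is structurally identical to the flat-space Faddeev--Popov setting, and the proof is analogous to that of \propref{prop:p-cohomological-vector-field}.

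The only real obstacle is bookkeeping: one must keep track of the sign picked up when $Q$, itself odd, is commuted past the odd ghost field $c^a$ inside the Leibniz rule, and one must remember that the structure constants are assumed totally antisymmetric in the appropriately raised/lowered indices so that the Jacobi identity is available in the form needed. Once those conventions are fixed consistently with \defnref{defn:supercommutator}, every contribution either cancels in pairs or vanishes by Jacobi, and the calculation is short.
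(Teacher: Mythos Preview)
Your proposal is correct and is precisely the ``short calculation using the Jacobi identity'' that the paper invokes in its one-line proof; you have simply spelled out which generators are nontrivial and why each case closes. One small imprecision: the $\tfrac{1}{\xi}\partial_\mu c^a$ piece in $Q^2 A_\mu^a$ is not cancelled by $\partial_\mu(Q^2 c^a)$ but rather by the $\tfrac{1}{\xi}$-part of $-\mathrm{g}\tensor{f}{^a_b_c} c^b (Q A_\mu^c)$; this is exactly the pairwise cancellation you allude to elsewhere, so the argument goes through unchanged.
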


\begin{proof}
	This follows immediately after a short calculation using the Jacobi identity.
\end{proof}

\enter

\begin{col} \label{col:anti-gauge_brst_operator}
	Given the situation of \defnref{defn:gauge_anti-brst_operator}, we have
	\begin{align}
		\commutatorbig{\overline{Q}}{\overline{Q}} & \equiv 2 \overline{Q}^2 \equiv 0
		\intertext{and}
		\commutatorbig{Q}{\overline{Q}} & \equiv Q \circ \overline{Q} + \overline{Q} \circ Q \equiv 0 \, ,
	\end{align}
	i.e.\ \(\overline{Q}\) is a homological vector field with respect to the gauge ghost degree that anticommutes with \(Q\).
\end{col}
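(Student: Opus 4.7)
The plan is to verify both identities by a direct calculation on each fundamental particle field listed in Definition \ref{defn:sheaf-of-particle-fields}, in close analogy with the proof of Proposition \ref{prop:q-cohomological-vector-field}. Since both $Q$ and $\overline{Q}$ annihilate the entire gravitational sector, the Minkowski background, the partial derivatives and $\delta_{ab}$, the only checks with nontrivial content are on the gauge boson $A^a_\mu$, the matter fields $\Phi,\Psi$, and the ghost--antighost--Nakanishi triple $\{c^a,\overline{c}_a,b_a\}$.

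For $\overline{Q}^2 \equiv 0$ on $A^a_\mu,\Phi,\Psi$ the action of $\overline{Q}$ coincides with that of $Q$ under the formal substitution $c \rightsquigarrow \overline{c}$, so the nilpotency argument of Proposition \ref{prop:q-cohomological-vector-field} applies verbatim: it rests on the Jacobi identity for the structure constants together with the Lie-algebra-morphism property of the infinitesimal actions $\mathfrak{H}_a,\mathfrak{S}_a$. On $\overline{c}_a$ the computation reduces to the vanishing of $f^a{}_{bc}f^b{}_{de}\overline{c}^d\overline{c}^e\overline{c}^c$ after antisymmetrization in the three odd $\overline{c}$'s, which is again Jacobi. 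The genuinely new checks are on $c^a$ and $b_a$: applying $\overline{Q}$ a second time, the $b$-linear terms produced by $\overline{Q}(-b_a/\xi)$ and by the piece $\overline{Q}c^c = -b_c/\xi + \ldots$ sitting inside $\overline{Q}\bigl(\tfrac{\mathrm{g}}{2}f^a{}_{bc}\overline{c}^b c^c\bigr)$ have to cancel, while the remaining cubic terms in $\overline{c}$ must vanish by Jacobi; a short calculation confirms that the coefficients in Definition \ref{defn:gauge_anti-brst_operator} are precisely those that make this work. The verification of $\{Q,\overline{Q}\} \equiv 0$ follows the same pattern: on the matter and gauge-boson fields the anticommutator collapses to a sum of bilinears in $c,\overline{c}$ whose vanishing is dictated by the antisymmetry $f^a{}_{bc} = -f^a{}_{cb}$ combined with the oddness of both ghosts, and on the ghost sector the $b$-dependent contributions produced by $Q\overline{Q}$ and $\overline{Q}Q$ cancel against each other, again leaving only Jacobi-cancelled cubic ghost terms.

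The main obstacle is sign bookkeeping in the ghost--Nakanishi sector: one has to check that the modifications $\overline{Q}c^a$, $\overline{Q}\overline{c}_a$, $\overline{Q}b_a$ carry exactly the coefficients required so that the $b$-dependent remainders cancel in both $\overline{Q}^2$ and $Q\overline{Q}+\overline{Q}Q$ on $c$, $\overline{c}$ and $b$; this is the gauge-theoretic analogue of the Curci--Ferrari/Nakanishi consistency condition and is the reason the definition of $\overline{Q}$ on this triple cannot simply be obtained by the naive substitution $c \rightsquigarrow \overline{c}$. Once these coefficients are confirmed as stated, everything else is a short Jacobi-identity calculation of exactly the form used in Proposition \ref{prop:q-cohomological-vector-field}.
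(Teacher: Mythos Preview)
Your proposal is correct and follows essentially the same approach as the paper: a direct verification on the fundamental fields reducing to the Jacobi identity, exactly as in Proposition~\ref{prop:q-cohomological-vector-field}. The paper's own proof is the one-liner ``analogously to Proposition~\ref{prop:q-cohomological-vector-field}'', so your write-up is simply a more explicit account of the same calculation, with the useful observation that the nontrivial content lies in the modified ghost--antighost--Nakanishi sector.
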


\begin{proof}
	This statement can be shown analogously to \propref{prop:q-cohomological-vector-field}.
\end{proof}

\enter

\begin{prop} \label{prop:lorenz_gauge_fixing_fermion}
	The Quantum Yang--Mills theory gauge fixing Lagrange density and its accompanying ghost Lagrange density
	\begin{equation} \label{eqn:qym-linearized-lorenz-gf-ghost}
	\begin{split}
		\mathcal{L}_{\textup{QYM-GF}} + \mathcal{L}_{\textup{QYM-Ghost}} & = - \frac{1}{2 \mathrm{g}^2 \xi}  \delta_{a b} \linLorenz^a \linLorenz^b \dif V_\eta \\ & \phantom{=} - \frac{1}{\xi} \eta^{\mu \nu} \overline{c}_a \left ( \partial_\mu \partial_\nu c^a \right ) \dif V_\eta \\ & \phantom{=} - \mathrm{g} \eta^{\mu \nu} \tensor{f}{^a _b _c} \overline{c}_a \left ( \partial_\mu \bigl ( c^b A^c_\nu \bigr ) \right ) \dif V_\eta
	\end{split}
	\end{equation}
	for the linearized Lorenz gauge fixing functional \(\linLorenz^a \coloneq \mathrm{g} \eta^{\mu \nu} \left ( \partial_\mu A_\nu^a \right )\) can be obtained from the following gauge fixing fermion \(\digamma \in \CQ^{\{ 0 \}, (0,-1)}\)
	\begin{equation}
		\digamma \coloneq \overline{c}_a \left ( \frac{1}{\mathrm{g}} \linLorenz^a + \frac{1}{2} b^a \right ) \dif V_\eta
	\end{equation}
	via \(Q \digamma\).\footnote{The linearized Lorenz gauge fixing condition for the electroweak sector needs the following adjustments: In the following, \(A\) denotes the photon, \(Z\) the \(Z\)-boson, \(W^\pm\) the \(W^\pm\)-bosons and \(\phi_Z\), \(\phi_{W^+}\) and \(\phi_{W^-}\) the corresponding Goldstone bosons. In addition, we have the corresponding coupling constants \(\mathrm{e}_A\), \(\mathrm{e}_Z\) and \(\mathrm{e}_W\) and gauge fixing parameters \(\xi_A\), \(\xi_Z\) and \(\xi_W\). Then, the corresponding gauge fixing functionals are given via: \begin{subequations} \begin{align} \linLorenz_A & \coloneq \mathrm{e} \eta^{\mu \nu} \bigl ( \partial_\mu A_\nu \bigr ) \, , \\ \linLorenz_Z & \coloneq \mathrm{e}_Z \left ( \eta^{\mu \nu} \bigl ( \partial_\mu Z_\nu \bigr ) + \xi_Z m_Z \phi_Z \right ) \, , \\ \linLorenz_{W^\pm} & \coloneq \mathrm{e}_W \left ( \eta^{\mu \nu} \bigl ( \partial_\mu W^{\pm}_\nu \bigr ) \pm \imaginary \xi_W m_W \phi_{W^\pm} \right ) \intertext{and with corresponding gauge fixing Lagrange density} \mathcal{L}_{\textup{EW-GF}} & \coloneq - \frac{1}{2} \biggl ( \frac{1}{\mathrm{e}_A^2 \xi_A} \linLorenz_A^2 + \frac{1}{\mathrm{e}_Z^2 \xi_Z} \linLorenz_Z^2 + \frac{1}{\mathrm{e}_W^2 \xi_W} \left ( \linLorenz_{W^+} \linLorenz_{W^-} \right ) \biggr ) \, . \end{align} \end{subequations}}
\end{prop}

\begin{proof}
	The claimed statement follows directly from the calculations
	\begin{subequations}
	\begin{align}
		Q \digamma & = \frac{1}{\xi} b_a \left ( \frac{1}{\mathrm{g}} \linLorenz^a + \frac{1}{2} b^a \right ) \dif V_\eta - \frac{1}{\mathrm{g}} \overline{c}_a \left ( Q \linLorenz^a \right ) \dif V_\eta \label{eqn:q_digamma_eta}
		\intertext{with}
		\begin{split}
		Q \linLorenz^a & = \mathrm{g} \eta^{\mu \nu} \partial_\mu \left ( Q A_\nu^a \right ) \\
		& = \mathrm{g} \eta^{\mu \nu} \partial_\mu \left ( \frac{1}{\xi} \partial_\nu c^a + \mathrm{g} \tensor{f}{^a _b _c} c^b A_\nu^c \right )
		\end{split}
		\intertext{and then finally eliminating the Lautrup--Nakanishi auxiliary field \(b_a\) by inserting its equation of motion}
		\operatorname{EoM} \left ( b_a \right ) & = - \frac{1}{\mathrm{g}} \linLorenz^a \, ,
	\intertext{which are obtained as usual via an Euler--Lagrange variation of \eqnref{eqn:q_digamma_eta}, i.e.\ by solving}
	0 & \overset{!}{=} \left ( \left ( \frac{\partial}{\partial b_a} \right ) - \partial_\mu \left ( \frac{\partial}{\partial \left ( \partial_\mu b_a \right )} \right ) \right ) Q \digamma \, ,
	\end{align}
	\end{subequations}
	where the second term vanishes identically, as \(b_a\) is a Lagrange multiplier and thus has no kinetic term.
\end{proof}

\enter

\begin{col} \label{col:covariant_lorenz_gauge_fixing_fermion}
	Given the situation of \propref{prop:lorenz_gauge_fixing_fermion}. Then the covariant Lorenz gauge fixing and ghost Lagrange densities read
	\begin{equation} \label{eqn:qym-covariant-lorenz-gf-ghost}
	\begin{split}
		\mathcal{L}_{\textup{QYM-GF}} + \mathcal{L}_{\textup{QYM-Ghost}} & = - \frac{1}{2 \mathrm{g}^2 \xi}  \delta_{a b} \Lorenz^a \Lorenz^b \dif V_g \\ & \phantom{=} - \frac{1}{\xi} g^{\mu \nu} \overline{c}_a \left ( \nabla^{TM}_\mu \left ( \partial_\nu c^a \right ) \right ) \dif V_g \\ & \phantom{\coloneq} - \mathrm{g} g^{\mu \nu} \tensor{f}{^a _b _c} \overline{c}_a \left ( \nabla^{TM}_\mu \bigl ( c^b A^c_\nu \bigr ) \right ) \dif V_g \, ,
	\end{split}
	\end{equation}
	with the covariant Lorenz gauge fixing functional \(\Lorenz^a \coloneq \mathrm{g} g^{\mu \nu} \bigl ( \nabla^{TM}_\mu A_\nu^a \bigr )\). They can be obtained from the following gauge fixing fermion \(\digamma \! \! _{\{ 1 \}} \in \CQ^{\{ 1 \}, (0,-1)}\)
	\begin{equation}
		\digamma \! \! _{\{ 1 \}} \coloneq \overline{c}_a \left ( \frac{1}{\mathrm{g}} \Lorenz^a + \frac{1}{2} b^a \right ) \dif V_g \label{eqn:covariant_lorenz_gauge_fixing_fermion}
	\end{equation}
	via \(Q \digamma \! \! _{\{ 1 \}}\).\footnote{The covariant Lorenz gauge fixing condition for the electroweak sector needs the following adjustments: In the following, \(A\) denotes the photon, \(Z\) the \(Z\)-boson, \(W^\pm\) the \(W^\pm\)-bosons and \(\phi_Z\), \(\phi_{W^+}\) and \(\phi_{W^-}\) the corresponding Goldstone bosons. In addition, we have the corresponding coupling constants \(\mathrm{e}_A\), \(\mathrm{e}_Z\) and \(\mathrm{e}_W\) and gauge fixing parameters \(\xi_A\), \(\xi_Z\) and \(\xi_W\). Then, the corresponding gauge fixing functionals are given via: \begin{subequations} \begin{align} \Lorenz_A & \coloneq \mathrm{e} g^{\mu \nu} \bigl ( \nabla^{TM}_\mu A_\nu \bigr ) \, , \\ \Lorenz_Z & \coloneq \mathrm{e}_Z \left ( g^{\mu \nu} \bigl ( \nabla^{TM}_\mu Z_\nu \bigr ) + \xi_Z m_Z \phi_Z \right ) \, , \\ \Lorenz_{W^\pm} & \coloneq \mathrm{e}_W \left ( g^{\mu \nu} \bigl ( \nabla^{TM}_\mu W^{\pm}_\nu \bigr ) \pm \imaginary \xi_W m_W \phi_{W^\pm} \right ) \intertext{and with corresponding gauge fixing Lagrange density} \mathcal{L}_{\textup{EW-GF}} & \coloneq - \frac{1}{2} \biggl ( \frac{1}{\mathrm{e}_A^2 \xi_A} \Lorenz_A^2 + \frac{1}{\mathrm{e}_Z^2 \xi_Z} \Lorenz_Z^2 + \frac{1}{\mathrm{e}_W^2 \xi_W} \left ( \Lorenz_{W^+} \Lorenz_{W^-} \right ) \biggr ) \, . \end{align} \end{subequations}}
\end{col}

\begin{proof}
	This can be shown analogously to the proof of \propref{prop:lorenz_gauge_fixing_fermion}.
\end{proof}

\enter

\section{The diffeomorphism-gauge BRST double complex} \label{sec:diffeomorphism-gauge-brst-double-complex}

In this section, we show that the two BRST operators \(P\) and \(Q\) anticommute and thus give rise to the \emph{total BRST operator} as the sum \(D \coloneq P + Q\). Additionally, we show that each gauge theory gauge fixing fermion can be modified uniquely to become a tensor density of weight \(w = 1\). This is a useful choice, as then the graviton-ghosts decouple from matter of the Standard Model. Finally, we introduce the \emph{total gauge fixing fermion} as the sum \(\Upsilon \coloneq \stigma^{(1)} + \digamma \! \! _{\{ 1 \}}\), where \(\stigma^{(1)}\) is the gauge fixing fermion corresponding to the linearized de Donder gauge fixing and \(\digamma \! \! _{\{ 1 \}}\) is the gauge fixing fermion corresponding to the covariant Lorenz gauge fixing. This setup allows us to create the complete gauge fixing and ghost Lagrange densities of (effective) Quantum General Relativity coupled to the Standard Model via \(D \Upsilon\). We refer to \cite{Cattaneo_Schiavina} for the case of the BV-BFV formalism.

\enter

\begin{thm} \label{thm:total_brst_operator}
	Given the two BRST operators \(P \in \mathfrak{X}_{(1,0)} \left ( \BQ \right )\) and \(Q \in \mathfrak{X}_{(0,1)} \left ( \BQ \right )\) from \defnref{defn:diffeomorphism_brst_operator} and \defnref{defn:gauge_brst_operator}, respectively. Then we have
	\begin{align}
		\commutatorbig{P}{Q} & \equiv P \circ Q + Q \circ P \equiv 0 \, ,
		\intertext{i.e.\ their sum}
		D & \coloneq P + Q
	\intertext{is also a cohomological vector field with respect to the total ghost degree, and thus satisfying}
	\commutatorbig{D}{D} & \equiv 2 D^2 \equiv 0 \, .
	\end{align}
	We call \(D \in \mathfrak{X}_{(1)} \left ( \BQ \right )\) the total BRST operator.
\end{thm}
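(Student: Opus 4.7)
The plan is to establish \(\commutatorbig{P}{Q} \equiv 0\) directly; the nilpotency \(D^2 \equiv 0\) then follows automatically, since \(2 D^2 = \commutatorbig{D}{D} = \commutatorbig{P}{P} + 2 \commutatorbig{P}{Q} + \commutatorbig{Q}{Q}\) and Propositions~\ref{prop:p-cohomological-vector-field} and \ref{prop:q-cohomological-vector-field} give \(\commutatorbig{P}{P} = \commutatorbig{Q}{Q} = 0\). Both \(P\) and \(Q\) are odd superderivations on the algebra \(\mathcal{C} \left ( \mathcal{B}_\Q \right )\), of bidegrees \((1,0)\) and \((0,1)\) respectively. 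Their supercommutator \(\commutatorbig{P}{Q}\) is therefore again a superderivation, of even bidegree \((1,1)\); since a derivation is determined by its values on generators, it suffices to verify \(\commutatorbig{P}{Q} \varphi \equiv 0\) on each fundamental particle field of Definition~\ref{defn:sheaf-of-particle-fields}, together with the auxiliary quantities \(\eta_{\mu \nu}, \partial_\mu, \tensor{\Gamma}{^\rho _\mu _\nu}, \delta_{a b}\) on which the BRST operators have been specified in Definitions~\ref{defn:diffeomorphism_brst_operator} and \ref{defn:gauge_brst_operator}.

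The check splits neatly according to the structure of \(Q\). For every field annihilated by \(Q\), namely \(\varphi \in \big \{ h_{\mu \nu}, C_\rho, \overline{C}^\rho, B^\rho, \eta_{\mu \nu}, \partial_\mu, \tensor{\Gamma}{^\rho _\mu _\nu} \big \}\), the term \(P ( Q \varphi )\) vanishes automatically, so only \(Q ( P \varphi ) \equiv 0\) remains; inspection of Definition~\ref{defn:diffeomorphism_brst_operator} shows that \(P \varphi\) is in each case a polynomial in the \(Q\)-invariant quantities \(C, \overline{C}, B, \eta, \partial, \Gamma\), whence the claim follows from \(Q\) being a derivation. For the remaining fields \(\varphi \in \big \{ A_\mu^a, c^a, \overline{c}_a, b_a, \Phi, \Psi \big \}\), both operators act non-trivially: \(P \varphi\) is essentially \(\varkappa \, \Lie_C \varphi\) (with an extra spin-connection correction for \(\Psi\)), while \(Q \varphi\) is an infinitesimal gauge transformation generated by \(c\). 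The identity \(\commutatorbig{P}{Q} \varphi \equiv 0\) then expresses the commutativity of spacetime Lie transport with internal gauge rotation, up to Koszul signs.

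The core sign bookkeeping is illustrated by the gauge ghost \(c^a\). Since \(Q\) annihilates both \(C^\rho\) and \(\partial_\rho\), moving the odd \(Q\) past the odd factor \(C^\rho\) produces a minus sign, so
\begin{equation}
	Q \left ( P c^a \right ) = Q \left ( \varkappa C^\rho \partial_\rho c^a \right ) = - \varkappa C^\rho \partial_\rho \left ( Q c^a \right ) = \frac{\mathrm{g} \varkappa}{2} \tensor{f}{^a _b _c} C^\rho \partial_\rho \left ( c^b c^c \right ) \, .
\end{equation}
Conversely, expanding \(P ( c^b c^c )\) via the graded Leibniz rule and using that odd factors such as \(c^b\) and \(C^\rho\) pick up a sign upon commutation, the two terms reassemble into the Lie derivative of the product, giving \(P ( Q c^a ) = - \frac{\mathrm{g} \varkappa}{2} \tensor{f}{^a _b _c} C^\rho \partial_\rho ( c^b c^c )\), so that the sum cancels identically. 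The same mechanism covers \(A_\mu^a, \overline{c}_a, b_a, \Phi\) and \(\Psi\); there one additionally uses the \(P\)-invariance of the structure constants \(\tensor{f}{^a _b _c}\) and of the representation matrices \(\mathfrak{H}_a, \mathfrak{S}_a\), together with the \(Q\)-invariance of the vielbein components \(e^{\mu m}\) and Lorentz generators \(\boldsymbol{\sigma}_{m n}\) entering \(P \Psi\).

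The main obstacle is not conceptual but the careful tracking of Koszul signs when moving odd operators past odd fields; the most delicate case is the spinor \(\Psi\), whose \(P\)-transformation carries a spin-connection correction built from the vielbein and the Lorentz generators, and where one must verify that \(Q\) commutes cleanly through these metric-related factors so that the correction term conspires with the Lie-derivative part to cancel \(P ( Q \Psi )\). Once \(\commutatorbig{P}{Q} \equiv 0\) is confirmed on all generators, it extends to the full algebra \(\mathcal{C} \left ( \mathcal{B}_\Q \right )\) by the derivation property, and the cohomological nature of \(D := P + Q\) with respect to the total ghost degree follows as noted above.
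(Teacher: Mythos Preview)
Your proposal is correct and follows essentially the same approach as the paper: both verify \(\commutatorbig{P}{Q} \equiv 0\) by direct computation on the fundamental fields, relying on the Koszul sign \(C^\rho c^a = - c^a C^\rho\) and the fact that the internal-gauge data (\(\mathfrak{S}_a\), etc.) commute with the metric-related data (\(\boldsymbol{\sigma}_{mn}\), vielbeins). Your presentation is somewhat more structured---invoking the derivation property to reduce to generators and splitting into \(Q\)-annihilated versus \(Q\)-active fields---whereas the paper simply writes out \(P \circ Q\) as a vector field and observes it equals \(- Q \circ P\), but the underlying argument is the same.
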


\begin{proof}
	We show the first statement by an explicit calculation:
	\begin{equation}
	\begin{split}
		P \circ Q & = \varkappa \biggl ( \frac{1}{\xi} \left ( \partial_\mu C^\rho \right ) \left ( \partial_\rho c^a \right ) + \frac{1}{\xi} C^\rho \left ( \partial_\mu \partial_\rho c^a \right ) + \mathrm{g} \tensor{f}{^a _b _c} C^\rho \bigl ( \partial_\rho c^b \bigr ) A_\mu^c \\
		& \phantom{= \varkappa \biggl (} + \mathrm{g} \tensor{f}{^a _b _c} C^\rho c^b \bigl ( \partial_\rho A_\mu^c \bigr ) + \mathrm{g} \tensor{f}{^a _b _c} \left ( \partial_\mu C^\rho \right ) c^b A_\rho^c \biggr ) \frac{\partial}{\partial A_\mu^a} \\
		& \phantom{=} + \frac{\varkappa \mathrm{g}}{2} \tensor{f}{^a _b _c} C^\rho \biggl ( \bigl ( \partial_\rho c^b \bigr ) c^c + c^b \bigl ( \partial_\rho c^c \bigr ) \biggr ) \frac{\partial}{\partial c^a} + \frac{\varkappa}{\xi} C^\rho \left ( \partial_\rho b^a \right ) \frac{\partial}{\partial \overline{c}_a} \\
		& \phantom{=} + \varkappa \mathrm{g} C^\rho c^a \mathfrak{H}_a \cdot \left ( \partial_\rho \Phi \right ) \frac{\partial}{\partial \Phi} \\
		& \phantom{=} + \varkappa \mathrm{g} C^\rho c^a \mathfrak{S}_a \cdot \left ( \nabla^{\Sigma M}_\rho \Psi + \frac{\imaginary}{4} \left ( \partial_\mu X_\nu - \partial_\nu X_\mu \right ) e^{\mu m} e^{\nu n} \left ( \boldsymbol{\sigma}_{mn} \cdot \Psi \right ) \right ) \frac{\partial}{\partial \Psi} \\
		& = - Q \circ P \, ,
	\end{split}
	\end{equation}
	where we have used \(C^\rho c^a \equiv - c^a C^\rho\) and \(\commutatorbig{\mathfrak{S}_a}{\boldsymbol{\sigma}_{mn}} \equiv 0\). Then, the second statement follows immediately by \propsaref{prop:p-cohomological-vector-field}{prop:q-cohomological-vector-field}.
\end{proof}

\enter

\begin{col} \label{col:total_anti-brst_operator}
	Given the two anti-BRST operators \(\overline{P} \in \mathfrak{X}_{(-1,0)} \left ( \BQ \right )\) and \(\overline{Q} \in \mathfrak{X}_{(0,-1)} \left ( \BQ \right )\) from \defnref{defn:diffeomorphism_anti-brst_operator} and \defnref{defn:gauge_anti-brst_operator}, respectively. Then we have
	\begin{align}
		\commutatorbig{\overline{P}}{\overline{Q}} & \equiv \overline{P} \circ \overline{Q} + \overline{Q} \circ \overline{P} \equiv 0 \, ,
		\intertext{i.e.\ their sum}
		\overline{D} & \coloneq \overline{P} + \overline{Q}
		\intertext{is also a homological vector field with respect to the total ghost degree, and thus satisfying}
		\commutatorbig{\overline{D}}{\overline{D}} & \equiv 2 \overline{D}^2 \equiv 0 \, .
	\end{align}
	We call \(\overline{D} \in \mathfrak{X}_{(-1)} \left ( \BQ \right )\) the total anti-BRST operator.\footnote{Specifically, the total anti-BRST operator \(\overline{D}\) is related to the total BRST operator \(D\) via the total ghost conjugation \(\dagger\) from \defnref{defn:ghost-conjugation}, as will be shown in \lemref{lem:relation_BRST_anti-BRST_operators}.} Furthermore, given the three BRST operators \(P \in \mathfrak{X}_{(1,0)} \left ( \BQ \right )\), \(Q \in \mathfrak{X}_{(0,1)} \left ( \BQ \right )\) and \(D \in \mathfrak{X}_{(1)} \left ( \BQ \right )\) from \defnref{defn:diffeomorphism_brst_operator}, \defnref{defn:gauge_brst_operator} and \thmref{thm:total_brst_operator}, respectively. Then we find that all BRST and anti-BRST operators mutually anticommute.
\end{col}

\begin{proof}
	This statement can be shown analogously to \thmref{thm:total_brst_operator}, using additionally \colsaref{col:anti-diffeomorphism_brst_operator}{col:anti-gauge_brst_operator}.
\end{proof}

\enter

\begin{rem}
	We emphasize that the Hermitian ghost conjugation manifests the specific symmetry of degree-inversion, which is particular for graded superfunctions, such as functionals of fields \(\CQ\). In particular, in \lemref{lem:relation_BRST_anti-BRST_operators}, we will show that the BRST operators relate to their corresponding anti-variants via ghost conjugation. Additionally, we have seen in \colssaref{col:anti-diffeomorphism_brst_operator}{col:anti-gauge_brst_operator}{col:total_anti-brst_operator} that the corresponding Hodge--Laplace operators vanish: Geometrically, this comes from the fact that the BRST cocomplex splits into a Chevalley--Eilenberg part and a Koszul--Tate resolution. The BRST operator, as well as the anti-BRST operator, individually keep both complexes separate: For the BRST cocomplex the Chevalley--Eilenberg part is given via particle fields and ghost fields, whereas the Koszul--Tate resolution is given via antighost fields and the Lautrup--Nakanishi auxiliary field. Crucially, for the anti-BRST symmetry ghosts and antighosts switch their roles, together with a shift for the Lautrup--Nakanishi auxiliary field. Thus, composing the anti-BRST operator with the BRST operator mixes the Chevalley--Eilenberg and the Koszul--Tate complexes, which ultimately leads to the anticommutativity property of the BRST operator with its anti-BRST operator. Specifically, this differs from the de Rham codifferential, which is obtained analytically via an \(L^2\)-product.
\end{rem}

\enter

\begin{thm} \label{thm:no-couplings-grav-ghost-matter-sm}
	Let \(\mathcal{L}_\textup{SM}\) be the Standard Model Lagrange density with Yang--Mills gauge fixing fermion \(\chi\) and Lagrange density \(\mathcal{L}_{\textup{QYM-GF}} \coloneq Q \chi\). Then the following statements are equivalent:
	\begin{enumerate}
		\item Graviton-ghosts decouple from gauge bosons, gauge ghosts and matter particles
		\item \(\mathcal{L}_\textup{SM}\) is a covariant tensor density of weight \(w = 1\)
		\item \(\mathcal{L}_{\textup{QYM-GF}}\) is a covariant tensor density of weight \(w = 1\)
		\item \(\chi\) is a covariant tensor density of weight \(w = 1\)
	\end{enumerate}
	Specifically, the case of the Lorenz gauge fixing condition is given in \colref{col:covariant_lorenz_gauge_fixing_fermion}.
\end{thm}

\begin{proof}
	The equivalence between statements 1 and 2 follows from \lemref{lem:p_tensor_densities}: We know that \(P \mathcal{L}_\textup{SM} \simeq_\text{TD} 0\) if and only if \(\mathcal{L}_\textup{SM}\) is a tensor density of weight \(w = 1\). For the equivalence between statements 2 and 3, we notice that \(\mathcal{L}_{\textup{QYM-GF}}\) is the only part of \(\mathcal{L}_\textup{SM}\) where the tensor density weight is not fixed to \(w = 1\) by its requirement to be diffeomorphism invariant. Finally, the equivalence between statements 3 and 4 follows from \thmref{thm:total_brst_operator}: We have that
	\begin{subequations}
	\begin{align}
		\left ( P \circ Q \right ) & = - \left ( Q \circ P \right ) \, ,
		\intertext{and thus the equivalence}
		\left ( P \circ Q \right ) \chi \simeq_\text{TD} 0 & \iff P \chi \simeq_\text{TD} 0 \, ,
	\end{align}
	\end{subequations}
	which also concludes the equivalence between any of the mentioned statements.
\end{proof}

\enter

\begin{rem} \label{rem:use-covariant-YM-gauge-fixing-fermion}
	\thmref{thm:no-couplings-grav-ghost-matter-sm} motivates us to use the covariant Lorenz gauge fixing condition from \colref{col:covariant_lorenz_gauge_fixing_fermion} rather than the linearized version from \propref{prop:lorenz_gauge_fixing_fermion}: This avoids additional couplings between graviton-ghosts, gauge bosons and gauge ghosts at the cost of having additional vertex Feynman rules from the coupling of gravitons to the gauge boson gauge fixing condition. Nevertheless, it seems more convenient, if the gauge fixing condition breaks only gauge invariance and not also diffeomorphism invariance, e.g.\ for the \emph{total gauge fixing fermion} of the following \thmref{thm:total_gauge_fixing_fermion}.
\end{rem}

\enter

\begin{thm} \label{thm:total_gauge_fixing_fermion}
	Given the total BRST operator \(D\) from \thmref{thm:total_brst_operator}, then we define the total gauge fixing fermion \(\Upsilon \in \CQ^{(-1)}\) as the following sum
	\begin{equation}
		\Upsilon \coloneq \stigma^{(1)} + \digamma \! \! _{\{ 1 \}} \, ,
	\end{equation}
	where \(\stigma^{(1)}\) is the linearized de Donder gauge fixing fermion and \(\digamma \! \! _{\{ 1 \}}\) the covariant Lorenz gauge fixing fermion from \colsaref{col:linearized_de_donder_gauge_fixing_fermion}{col:covariant_lorenz_gauge_fixing_fermion}, respectively. Then the complete gauge fixing and ghost Lagrange densities for (effective) Quantum General Relativity coupled to the Standard Model can be generated via \(D \Upsilon\).
\end{thm}

\begin{proof}
	This follows directly from the calculation
	\begin{equation}
	\begin{split}
		D \Upsilon & = \bigl ( P + Q \bigr ) \bigl ( \stigma^{(1)} + \digamma \! \! _{\{ 1 \}} \bigr ) \\
		& = P \stigma^{(1)} + P \digamma \! \! _{\{ 1 \}} + Q \stigma^{(1)} + Q \digamma \! \! _{\{ 1 \}} \\
		& \simeq_\text{TD} P \stigma^{(1)} + Q \digamma \! \! _{\{ 1 \}} \, ,
	\end{split}
	\end{equation}
	where we have used \lemref{lem:p_tensor_densities} and \(\simeq_\text{TD}\) means equality modulo total derivatives.
\end{proof}

\enter

\begin{lem} \label{lem:relation_BRST_anti-BRST_operators}
	The BRST operators are related to their corresponding anti-variants via ghost conjugation. In particular, we have:
	\begin{subequations}
	\begin{align}
		\overline{P} & \equiv P^{\dagger_C} \, , \\
		\overline{Q} & \equiv Q^{\dagger_c}
		\intertext{and}
		\overline{D} & \equiv D^{\dagger} \, ,
	\end{align}
	\end{subequations}
	where \(\dagger_C\), \(\dagger_c\) and \(\dagger\) denote the graviton-ghost, gauge ghost and total ghost conjugations from \defnref{defn:ghost-conjugation}, respectively.
\end{lem}

\begin{proof}
	This follows immediately from the respective definitions.
\end{proof}

\enter

\begin{thm} \label{thm:relation_cocomplex_complex}
	The BRST cocomplexes are isomorphic to the anti-BRST complexes in negative degree via ghost conjugation:
	\begin{subequations}
	\begin{align}
		\bigl ( \CQ^{i,j}, P^{i,j} \bigr )^{\dagger_C} & \cong \bigl ( \CQ_{-i}^j, \overline{P}_{-i}^j \bigr ) \, , \\
		\bigl ( \CQ^{i,j}, Q^{i,j} \bigr )^{\dagger_c} & \cong \bigl ( \CQ^i_{-j}, \overline{Q}^i_{-j} \bigr )
		\intertext{and}
		\bigl ( \CQ^k, D^k \bigr )^\dagger & \cong \bigl ( \CQ_{-k}, \overline{D}_{-k} \bigr ) \, ,
	\end{align}
	\end{subequations}
	for all \(i, j, k \in \mathbb{Z}\), where \(i\) denotes the graviton-ghost degree, \(j\) the gauge ghost degree and \(k\) the total ghost degree.
\end{thm}

\begin{proof}
	Let \(S \in \setbig{P, Q, D}\) be any of the three BRST operators and \(\overline{S} \in \setbig{\overline{P}, \overline{Q}, \overline{D}}\) be any of the three anti-BRST operators. Furthermore, we use the coordinates on the sheaf of particle fields \(\FQ\) with the shifted anti-Hermitian Lautrup--Nakanishi auxiliary fields from \defnref{defn:ghost-conjugation}, i.e.\
	\begin{subequations}
	\begin{align}
		{B^\prime}^\rho & \coloneq B^\rho - \frac{\varkappa \zeta}{2} \left ( \overline{C}^\sigma \bigl ( \partial_\sigma C^\rho \bigr ) - \bigl ( \partial_\sigma \overline{C}^\rho \bigr ) C^\sigma \right )
		\intertext{and}
		{b^\prime}^a & \coloneq b^a - \frac{\mathrm{g} \xi}{2} \tensor{f}{^a _b _c} \overline{c}^b c^c \, .
	\end{align}
	\end{subequations}
	Then, the ghost conjugations from \defnref{defn:ghost-conjugation} constitute the claimed isomorphisms between the cochain complexes \((\CQ^\bullet, S^\bullet)\) and the chain complexes \((\CQ_{- \bullet}, \overline{S}_{- \bullet})\) due to \lemref{lem:relation_BRST_anti-BRST_operators}.
\end{proof}

\enter

\begin{col} \label{col:relation_cohomology_homology}
	 In particular, the BRST cohomologies are isomorphic to the anti-BRST homologies in negative degree:
	\begin{subequations}
	\begin{align}
		H^{i,j} \bigl ( P \bigr ) & \cong H_{-i}^j \bigl ( \overline{P} \bigr ) \, , \\
		H^{i,j} \bigl ( Q \bigr ) & \cong H^i_{-j} \bigl ( \overline{Q} \bigr )
		\intertext{and}
		H^k \bigl ( D \bigr ) & \cong H_{-k} \bigl ( \overline{D} \bigr ) \, ,
	\end{align}
	\end{subequations}
	again for all \(i, j, k \in \mathbb{Z}\), where \(i\) denotes the graviton-ghost degree, \(j\) the gauge ghost degree and \(k\) the total ghost degree. In particular, the ghost conjugation operators induce a zero-dimensional Poincaré duality.
\end{col}

\begin{proof}
	This is an immediate consequence of \thmref{thm:relation_cocomplex_complex}.
\end{proof}

\enter

\begin{defn}[Sign-twisted anti-BRST operators] \label{defn:sign_twisted_anti-BRST_operators}
	Let \(\overline{S} \in \setbig{\overline{P}, \overline{Q}, \overline{D}}\) be any of the three anti-BRST operators, then we call
	\begin{equation}
		\widetilde{S}^l \coloneq \left ( -1 \right )^l \overline{S}^l
	\end{equation}
	the corresponding sign-twisted anti-BRST operator, where \(l \in \mathbb{Z}\) denotes the respective degree. Crucially, all BRST operators \(S \in \setbig{P, Q, D}\) now commute instead of anticommute with their sign-twisted anti-BRST operator \(\widetilde{S}\).
\end{defn}

\enter

\begin{prop} \label{prop:cochain_chain_homotopy}
	Given the situation of \defnref{defn:sign_twisted_anti-BRST_operators} and let \(\widetilde{\mathcal{S}} \coloneq S \circ \widetilde{S}\). Then the sign-twisted anti-BRST operators \(\widetilde{S}\) are cochain homotopies between \(\widetilde{\mathcal{S}}\) and the zero map. Similarly, the BRST operators \(S\) are chain homotopies between \(\widetilde{\mathcal{S}}\) and the zero map. In particular, the maps \(\widetilde{\mathcal{S}}\) are null homotopic for all \(\widetilde{\mathcal{S}} \in \setbig{\widetilde{\mathcal{P}}, \widetilde{\mathcal{Q}}, \widetilde{\mathcal{D}}}\).
\end{prop}

\begin{proof}
	Again, let \(S \in \setbig{P, Q, D}\) be any of the three BRST operators, with corresponding sign-twisted anti-BRST operator \(\widetilde{S}\) and their composition \(\widetilde{\mathcal{S}} \coloneq S \circ \widetilde{S}\). Then, the statements about cochain and chain homotopies are an immediate consequence of the following diagram:
	\begin{subequations}
	\begin{equation}
	\begin{tikzcd}[row sep=huge]
		\ddots \arrow{dr}{2 \widetilde{\mathcal{S}}} & \vdots \arrow{d}{\widetilde{S}} \arrow{dr}{2 \widetilde{\mathcal{S}}} & \vdots \arrow{d}{\widetilde{S}} \arrow{dr}{2 \widetilde{\mathcal{S}}} & \\
		\cdots \arrow{r}{S} \arrow{dr}{2 \widetilde{\mathcal{S}}} & \CQ^\bullet \arrow{r}{S} \arrow{d}{\widetilde{S}} \arrow{dr}{2 \widetilde{\mathcal{S}}} & \CQ^{\bullet +1} \arrow{r}{S} \arrow{d}{\widetilde{S}} \arrow{dr}{2 \widetilde{\mathcal{S}}} & \cdots \\
		\cdots \arrow{r}{S} \arrow{dr}{2 \widetilde{\mathcal{S}}} & \CQ^{\bullet -1} \arrow{r}{S} \arrow{d}{\widetilde{S}} \arrow{dr}{2 \widetilde{\mathcal{S}}} & \CQ^\bullet \arrow{r}{S} \arrow{d}{\widetilde{S}} \arrow{dr}{2 \widetilde{\mathcal{S}}} & \cdots \\
		& \vdots & \vdots & \ddots
	\end{tikzcd}
	\end{equation}
	Explicitly, we interpret the diagonal maps as
	\begin{equation}
		2 \widetilde{\mathcal{S}} \equiv 2 \widetilde{\mathcal{S}} - 0 \, ,
	\end{equation}
	\end{subequations}
	i.e.\ the composition map minus the zero map.
\end{proof}

\enter

\begin{rem}
	\propref{prop:cochain_chain_homotopy} gives an interpretation for gauge fixing fermions that are induced by gauge fixing bosons in the sense of \eqnref{eqn:gauge-fixing-boson}: This will be discussed in \cite{Prinz_6}. Furthermore, we remark that starting from \defnref{defn:sign_twisted_anti-BRST_operators} similar statements hold when sign-twisting the BRST operator \(S\) instead of the anti-BRST operator \(\overline{S}\): This can be seen easily, as the construction only depends on the additional relative minus signs, which makes them commute instead of anticommute. Moreover, we remark that the maps \(\widetilde{\mathcal{S}} \coloneq S \circ \widetilde{S}\) are the sign-twisted versions of the \emph{super-BRST operators} introduced in \cite[Definition 2.5]{Prinz_6}. Additionally, they can also be understood as the corresponding Hodge--Laplace operators of the respective sign-twisted (co)complexes.
\end{rem}

\enter

\section{Conclusion} \label{sec:conclusion}

We have studied the BRST double complex of (effective) Quantum General Relativity coupled to the Standard Model. To this end, we started with a review of the geometric underpinnings, notably graded supergeometry, in \sectionref{sec:geometric-setup}. Then, we have studied the diffeomorphism and gauge complexes separately in \sectionsaref{sec:diffeomorphism-brst-complex}{sec:gauge-brst-complex}, respectively. In particular, we have recalled that the BRST and anti-BRST operators are nilpotent and thus cohomological or homological differentials, respectively. In addition, we have discussed the gauge fixing fermions for the de Donder, linearized de Donder, Lorenz and covariant Lorenz gauge fixing conditions. A particularly important result is \lemref{lem:p_tensor_densities}, which characterizes all Lagrange densities that are essentially closed with respect to the diffeomorphism BRST operator and diffeomorphism anti-BRST operator as scalar tensor densities of weight \(w = 1\). Finally, we study the corresponding double complex in \sectionref{sec:diffeomorphism-gauge-brst-double-complex}: Our main results are that all BRST and anti-BRST operators anticommute and thus give rise to the corresponding \emph{total BRST operator} and \emph{total anti-BRST operator}, cf.\ \thmref{thm:total_brst_operator} and \colref{col:total_anti-brst_operator}. Furthermore, we have shown that graviton-ghosts decouple from matter of the Standard Model if the gauge fixing fermion of Yang--Mills theory is a tensor density of weight \(w = 1\), cf.\ \thmref{thm:no-couplings-grav-ghost-matter-sm}. Moreover, we have shown that all gauge fixing and ghost Lagrange densities of (effective) Quantum General Relativity coupled to the Standard Model can be derived from a \emph{total gauge fixing fermion}, via the action of the \emph{total BRST operator}, cf.\ \thmref{thm:total_gauge_fixing_fermion}. Additionally, we have shown in \thmref{thm:relation_cocomplex_complex} that the BRST cocomplexes are isomorphic to the anti-BRST complexes in negative degree via ghost conjugation. In particular, we have observed in \colref{col:relation_cohomology_homology} that this implies that the corresponding cohomologies are related to the respective homologies in negative degree. Finally, we have shown in \propref{prop:cochain_chain_homotopy} that the sign-twisted anti-BRST operators and the corresponding BRST operators can be interpreted as (co)chain homotopies between their respective composition and the zero map. In a direct follow-up article \cite{Prinz_6}, we have used the findings of this article to derive the symmetric (i.e.\ Hermitian) ghost Lagrange densities for (effective) Quantum General Relativity and covariant Quantum Yang--Mills theory. This was first introduced in \cite{Baulieu_Thierry-Mieg_1} for the case of pure Quantum Yang--Mills theory. Additionally, we have applied our results to study the transversal structure of (effective) Quantum General Relativity coupled to the Standard Model in \cite{Prinz_7}. Finally, we have also considered the case of perturbative Quantum Gravity with a cosmological constant in \cite{Prinz_8}. In future work, we would like to construct a \emph{perturbative BRST cocomplex} as a \emph{Feynman graph cocomplex}, as follows \cite{Prinz_9}: This will be set up such that its cohomology groups consist of transversal linear combinations of Feynman graphs. Then, for theories without gauge anomalies, we construct the corresponding renormalization Hopf algebra directly on the respective cohomology groups, to manifestly combine transversality with renormalization in \cite{Prinz_3}.

\section*{Acknowledgments}
\addcontentsline{toc}{section}{Acknowledgments}

The author thanks Peter Teichner and Chris Wendl for clarifying discussions on chain homotopies. This research is supported by the \emph{Kolleg Mathematik Physik Berlin} of the Humboldt University of Berlin and the University of Potsdam via the research group of Sylvie Paycha.

\bibliography{References}{}
\bibliographystyle{babunsrt}

\end{document}